\newtheorem{thm}{Theorem}[section]
\newtheorem{proposition}{Proposition}[section]
\newtheorem{definition}{Definition}[section]
\newtheorem{lem}{Lemma}[section]
\newtheorem{remark}{Remark}[section]
\newenvironment{proof}{{\noindent{\bf Proof:}}}{$\hfill\Box$}
\def\tr{{\text{tr}}}
\def\dim{{\text{dim}}}
\def\v{\vert}
\def\r{\rangle}
\def\group{\mathbf{F}_{q}^+\!\rtimes\mathbf{F}_q^{\times}}
\def\near{\mathbf{H}^{+}\!\rtimes \mathbf{H}^{\times}}
\begin{document}

\title{The Quantum Double Model with Boundary: Condensations and Symmetries}

\author{Salman Beigi\\ {\it \small Institute for Quantum Information}\\[-1mm] {\it \small California Institute of
Technology, Pasadena, CA}\\[-1mm]{\it \small and}\\[-1mm]
{\it \small School of Mathematics}\\[-1mm]
{\it \small Institute for Research in Fundamental Sciences (IPM), Tehran, Iran}
\and Peter W. Shor \\ {\it \small Department of Mathematics}\\[-1mm] {\it \small Massachusetts Institute
of Technology, Cambridge, MA}
\and Daniel Whalen\\ {\it \small Department of Mathematics}\\[-1mm] {\it \small Massachusetts Institute
of Technology, Cambridge, MA}}

\maketitle


\begin{abstract} 
Associated to every finite group, Kitaev has defined the quantum double model for every orientable surface without boundary. In this paper,
we define boundaries for this model and characterize condensations; that is, we find all quasi-particle excitations (anyons) which disappear when they move to the boundary. We then consider two phases of the quantum double model corresponding to two groups with a domain wall between them, and study the tunneling of anyons from one phase to the other. Using this framework we discuss the necessary and sufficient conditions when two different groups give the same anyon types. As an application we show that in the quantum double model for $S_3$ (the permutation group over three letters) there is a chargeon and a fluxion which are not distinguishable. This group is indeed a special case of groups of the form of the semidirect product of the additive and multiplicative groups of a finite field, for all of which we prove a similar symmetry.
\end{abstract}

\section{Introduction}\label{sec:intro}

The quantum double model of Kitaev~\cite{kitaev} has been studied extensively in the recent years both from the point of view of quantum error correcting codes as well as non-abelian statistics. Bombin and Martin-Delgado~\cite{bombin} have defined a generalization of the Kitaev Hamiltonian and characterized condensations and confinements in their model. They have also studied their model with domain walls~\cite{nested}.  
Moreover, Levin-Wen model~\cite{levin-wen} can be considered as a generalization of the quantum double model for unitary tensor categories~\cite{aguado}, and provides us with a general approach to translate mathematical objects to physical concepts and vice versa. Kitaev and Kong (personal communication, 2009) have used this framework and defined the Levin-Wen model with boundaries and domain walls between two phases corresponding to two categories. In their model, the boundaries are parametrized by an algebra in the corresponding category. So by fixing an algebra, one can study edge excitations, condensations, and tunneling of quasi-particle excitations from one phase to another. However, the connection between this model and the work of Bombin and Martin-Delgado \cite{bombin, nested} is not clear. In particular, we do not know which algebras give the condensations of \cite{bombin}. In this paper, we define the quantum double model with boundary, and based on the recent work of 
Davydov~\cite{davydov} on the classification of algebras in group-theoretical modular categories, try to characterize possible condensations in this model. Before explaining our work and its consequences let us start with the example of the toric code.

Consider a planar square lattice and to each edge correspond a Hilbert space with basis elements indexed with $\mathbb{Z}_2$. The Hamiltonian is a summation of vertex and face operators which are defined in terms of $\sigma_x$ and $\sigma_z$ Pauli matrices (see~\cite{kitaev} for details). Then elementary excitations of the system correspond to certain chains of $\sigma_x$ and $\sigma_z$ operators (ribbon operators) which create two quasi-particles at the end points of the ribbon. A string of $\sigma_x$ operators gives \emph{magnetic charges} (fluxions), denoted by $m$, and a chain of $\sigma_z$ operators gives \emph{electric charges} (chargeons), denoted by $e$. Moreover, movement of quasi-particles is equivalent to extending the corresponding ribbons, and then their braidings can be defined. Simultaneous application of $\sigma_x$ and $\sigma_z$ chains corresponds to the fusion of $m$ and $e$: $\epsilon=m\otimes e$. 
These three particles together with the vacuum give the system of \emph{anyons} corresponding to group $\mathbb{Z}_2$; they are indeed the four irreducible representations of the quantum double of $\mathbb{Z}_2$, denoted by $D(\mathbb{Z}_2)$. 

Assume that there is a defect line in the lattice which divides the plane into two parts and so that the lattice on the right hand side is the dual of the lattice on the left. That is, on the left half-plane the vertex operators are defined in terms of $\sigma_x$ and on the right half-plane in terms of $\sigma_z$, and similarly for the face operators. (The vertex and face operators should be carefully defined on the defect line; details are given in the work of Kitaev and Kong and in~\cite{ising}.) Now consider an $m$ excitation on the left and by applying a chain of $\sigma_x$ operators move it to the right hand side. Due to the structure of the lattice, the string of $\sigma_x$ operators will change to $\sigma_z$ terms on the right, which means that $m$ becomes an $e$ on the right. Thus the operation of moving particles from the left half-plane to the right side exchanges $e$ and $m$ (while keeping the vacuum and $\epsilon$ unchanged). As a result, all braidings and fusions are symmetric with respect to the transposition $(e, m)$.

In this paper we generalize the above construction for every group. We consider a planar lattice with a defect line and define the Kitaev's Hamiltonian on the left and right half-planes corresponding to two groups $G$ and $G'$. The excitations on the two bulks again correspond to the representations of the quantum doubles of $G$ and $G'$, and ribbon operators create and move these quasi-particles. Therefore, if the movement of quasi-particles from the left half-plane to the right makes sense, i.e., a consistent definition of the Hamiltonian near the domain wall is available, we can study the tunneling of anyons from one phase to the other. 

Suppose $G'$ is the trivial single-element group. In this case there is no excitation on the right hand side, and indeed the domain wall turns into a boundary. The toric code with boundary ($G=\mathbb{Z}_2$) has been studied by Bravyi and Kitaev~\cite{bravyi}. They have considered two types of boundaries: the $z$-boundary and the $x$-boundary, and have shown that an $m$ ($e$) excitation disappears when it moves towards the $x$-boundary ($z$-boundary). In other words, anyons $m$ and $e$ get \emph{condensed} near the corresponding boundaries. Kitaev and Kong (personal communication, 2009) have generalized this idea and defined the Levin-Wen model with boundaries. In their model, the boundary is parametrized in terms of an algebra in the corresponding category. 

Here we define a boundary for the quantum double model in terms of a subgroup $K\subseteq G$ and a $2$-cocycle of $K$. Then we characterize the anyons that become condensed at the boundary. We finally by applying the \emph{folding idea} turn a domain wall between two phases $G$ and $G'$ into a boundary given by some $U\subseteq G\times G'$ and a $2$-cocycle, so the tunneling of anyons from one phase to another can be studied in terms of condensations.

Using this machinery, we study groups $G$ for which a symmetry similar to that of $\mathbb{Z}_2$ exists. That is, a transposition of a chargeon-fluxion pair, together with replacing each anyon with its charge conjugation, gives a symmetry of anyons corresponding to $G$. We show that all groups of the form $\group$, where $\mathbf{F}_q$ is the finite field with $q$ elements, have this property. Note that $q=2$ gives $\mathbb{Z}_2$ and $q=3$ corresponds to $S_3$, the permutation group over three letters.

The rest of this paper is organized as follows. In the following two sections, we review the basic ingredients of the quantum double of finite groups and the quantum double model. In Section~\ref{sec:boundary-1} we define a boundary for the Kitaev model that depends only on a subgroup (the corresponding $2$-cocycle is trivial) and compute the condensations. This construction is generalized in Section~\ref{sec:boundary-2}. In Section~\ref{sec:domain-wall} we consider a domain wall between two phases, and by applying the folding idea, turn it into a boundary. We then use all the previous results to find a non-trivial symmetry in the system of anyons of $\group$. In Appendix~\ref{app:b} we try to classify all groups for which a symmetry similar to that of $\group$ exists. Finial remarks and some open problems are discussed in Section~\ref{sec:conclusion}.

\section{Drinfeld double of a finite group}

Let us first fix some notations. $\mathbb{C}$ denotes the set of complex numbers and $\mathbb{C}^{\times}$ is its multiplicative group. $x^{\ast}$ is the complex conjugate of $x\in \mathbb{C}$ and $\v x\v^{2}=xx^{\ast}$. The identity element of a general group $G$ is denoted by $e$. For a subgroup $K$ of $G$ and $g\in K$, $Z_K(g)$ denotes the centralizer of $g$ in $K$: $Z_{K}(g)=\{h\in K:\, hg=gh\}$. We write $g\overset{K}\sim g'$ if there exists $h\in K$ such that $hgh^{-1}=g'$. When $K=G$ and there is no confusion we drop $K$ in these notations ($Z_{G}(g)=Z(g)$ and $g\sim g'$ means $g\overset{G}\sim g'$). The conjugacy class of $g\in G$ is denoted by $\overline{g}$ ($\overline{g}=\{hgh^{-1}:\, h\in G\}$). In this paper all representations are over complex numbers, and for a representation $\rho$ of a group, $\rho^{\ast}$ denotes its complex conjugate representation. $\tr_{\rho}(\cdot)$ is the character of $\rho$, and $\mathbf{1}$ denotes the trivial representation ($\tr_{\mathbf{1}}(\cdot) =1$). $\delta$ denotes the Kronecker delta function and for any relation $p$, $\delta_p =1$ if $p$ holds and otherwise $\delta_p = 0$. The size of a set $X$ is denoted by $\v X\v$. Finally, equivalence of categories, and isomorphism of groups and representations are shown by $\simeq$.

Although some of the results of this paper are stated in terms of category theory notions, a basic knowledge of the theory of anyons is enough to follow the proofs. For technical details we refer to~\cite{bakalov} and Appendix E of~\cite{anyons}.

\subsection{$D(G)$}

Let $G$ be a finite group. The quantum double or
Drinfeld double of $G$ denoted by $D(G)$, is a Hopf algebra containing $\mathbb{C}G$. $D(G)$ can be described by the $\mathbb{C}$-basis $\{gh^{\ast}: g,h \in G\}$ with the multiplication
\begin{align*}
(g_1h_1^{\ast})(g_2h_2^{\ast})=\delta_{h_2,\, g_2^{-1}h_1g_2}\, (g_1 g_2)h_2^{\ast},
\end{align*}
and the comultiplication
\begin{align}\label{eq:comulti}
\Delta(gh^{\ast}) = \sum_{h_1h_2=h} gh_1^{\ast}\otimes gh_2^{\ast}.
\end{align}
By $g\in D(G)$ we mean $g= \sum_h gh^{\ast}$, and $h^{\ast} = eh^{\ast}$ ($e$ is the identity of the group).
The unit of $D(G)$ is equal to $e=\sum_h e h^{\ast}$, the counit is given by $\varepsilon(g h^{\ast})=\delta_{h,e}$, and the antipode is
\begin{align*}
\gamma(g h^{\ast})= g^{-1} (gh^{-1}g^{-1})^{\ast}.
\end{align*}

\subsection{Representations of $D(G)$} \label{sec:irreps}

Consider an element $a\in G$ and let $\pi$ be a representation of $Z(a)$ over the vector space $W$ with the basis
$\{w_1, \dots, w_d\}$. Define the vector space $V_{( \overline{a}, \pi )}$ with the basis $\{\v b, w_i\r: b\in \overline{a},\, 1\leq i\leq d \}$. $V_{(\overline{a}, \pi)}$ is a representation of $D(G)$ as follows. For any $b\in \overline{a}$ fix $k_b\in G$ such that $b= k_b a
k_b^{-1}$. (Let $k_a = e$.) Observe that $k_{gbg^{-1}}^{-1}gk_b$ is always in $Z(a)$, and then for any $w\in W$, $b\in
\overline{a}$, and $gh^{\ast} \in D(G)$ define
\begin{align*}
g h^{\ast} \v b , w\r = \delta_{h, b}\, \v gbg^{-1} ,  \pi(k_{gbg^{-1}}^{-1}gk_b)\, w\r.
\end{align*}
It is easy to show that this action gives a representation of $D(G)$.  
$\chi_{(\overline{a}, \pi)}$, the character of this representation, is given by
\begin{align}\label{eq:dg-irreps}
\chi_{(\overline{a}, \pi)} (gh^{\ast}) =  \delta_{h\in \overline{a}}\, \delta_{gh, hg}\, \tr_{\pi} (k_h^{-1} g k_h).
\end{align}

If $\pi$ is an irreducible representation (irrep) of $Z(a)$, then the representation $V_{(\overline{a}, \pi)}$ of $D(G)$ is irreducible as well. Conversely, all irreps of $D(G)$ are of the above form and are indexed by conjugacy classes of $G$ and irreps of the centralizer of a fixed element in the corresponding conjugacy class (see for example \cite{bakalov}). 

The trivial representation of $D(G)$ is indexed by $\mathbf{0}=(e, \mathbf{1})$. Moreover, the \emph{(charge) conjugation} of $(\overline{a}, \pi)$, which we denote by $(\overline{a}, \pi)^{\vee}$, is isomorphic to $(\overline{a^{-1}},\pi^{\ast})$. The conjugacy class $\overline{a}$ of an irrep $(\overline{a}, \pi)$ is called its \emph{magnetic charge} and $\pi$ is its \emph{electric charge}. $(\overline{a}, \pi)$ is called a \emph{chargeon} if $a=e$ and a \emph{fluxion} if $\pi=\mathbf{1}$.

Irreducible representations of $D(G)$ are orthogonal to each other with respect to the following inner product:
\begin{align}\label{eq:inner-product-rep}
\langle \chi_1, \chi_2\rangle = \frac{1}{\vert G\vert} \sum_{g, h} \left(\chi_1(gh^{\ast})\right)^{\ast}\,  \chi_2(gh^{\ast}).
\end{align}
Then the multiplicity of the irrep $(\overline{a}, \pi)$ in the character $\chi$ is equal to $\langle \chi_{(\overline{a}, \pi)} , \chi\rangle$.

\subsection{Fusion rules}

Let $(\overline{a}, \pi)$ and $(\overline{a'}, \pi')$ be two irreps of $D(G)$. Then using the comultiplication~\eqref{eq:comulti}, $(\overline{a}, \pi)\otimes (\overline{a'}, \pi')$ is also a representation\footnote{The action of $gh^{\ast}$ on $v\otimes w$ is given by $\Delta(gh^{\ast}) v\otimes w$.} of $D(G)$ and is isomorphic to the direct sum of irreducible ones:
\begin{align*}
(\overline{a}, \pi)\otimes (\overline{a'}, \pi') \simeq \bigoplus_{(\overline{h}, \rho)}
N_{(\overline{a}, \pi)(\overline{a'}, \pi')}^{(\overline{h}, \rho)} (\overline{h},
\rho),
\end{align*}
where $ N_{(\overline{a}, \pi)(\overline{a'}, \pi')}^{(\overline{h}, \rho)}$ is a non-negative integer. To compute these numbers we may use the {\it Verlinde formula}.

Define the matrix $S$ whose rows and columns are indexed by irreps of $D(G)$ and
\begin{align}\label{eq:s-matrix}
S_{(\overline{a}, \pi)(\overline{a'}, \pi')} = \frac{1}{\v Z(a)\v \cdot \v Z(a')\v} \sum_{h:\, ha'h^{-1}\in Z(a)} \tr_{\pi}(ha'^{-1}h^{-1})\tr_{\pi'}(h^{-1}a^{-1}h).
\end{align}
Then $N_{XY}^Z$ can be computed in terms of $S$:
\begin{align}\label{eq:verlinde}
N_{XY}^{Z}=\sum_U \frac{S_{XU}S_{YU}S_{ZU}^{\ast}}{S_{\mathbf{0}U}},
\end{align}
where the summation runs over all irreps $U$, and $\mathbf{0}=(e, \mathbf{1})$ is the trivial representation.

\subsection{$\mathcal{Z}(G)$}\label{sec:z(g)}

$\mathcal{Z}(G)$ denotes the category of finite dimensional representations of $D(G)$ over complex numbers. Every object of $\mathcal{Z}(G)$ is isomorphic to a direct sum of simple objects, i.e., irreducible representations. $\mathcal{Z}(G)$ is a fusion category, where the fusion rules are given by the Verlinde formula. Moreover, the $R$-matrix 
\begin{align*} 
R=\sum_{g\in G} g^{\ast}\otimes g,
\end{align*}
defines the braiding $C_{X, Y}= PR: X\otimes Y \rightarrow Y\otimes X$ of two representations $X$ and  $Y$, where $P$ is the transposition of $X$ and $Y$. ($C_{X, Y}  v \otimes  w = \sum_{g} gw \otimes g^{\ast} v$.) $\mathcal{Z}(G)$ is a modular tensor category (see \cite{bakalov} for details).

\subsection{Example: $\mathcal{Z}(S_3)$}\label{sec:s3}

Let $G=S_3$ be the permutation group over three letters: $S_3= \langle \sigma, \tau: \sigma^2=\tau^3=e, \sigma \tau = \tau^{-1}\sigma \rangle $. $D(S_3)$ has eight irreducible representations described in the following table.
\begin{align*}
\begin{array}{|c|ccc|cc|ccc|}
\hline
      & A & B & C & D & E  & F & G & H   \\
\hline
\text{conjugacy class} & e & e & e & \overline{\sigma} & \overline{\sigma} &  \overline{\tau} & \overline{\tau} & \overline{\tau}   \\
 \text{irrep of the centralizer}     & \mathbf{1} & sign & \pi & \mathbf{1} & [-1] &  \mathbf{1} & [\omega] & [\omega^{\ast}]  \\
\hline
\end{array}
\end{align*}
Here $sign$ denotes the sign representation, $\pi$ is the two-dimensional representation of $S_3$, and $[-1]$, and $[\omega], [\omega^{\ast}]$ denote the non-trivial representations of $Z(\sigma)=\{e, \sigma\}$ and $Z(\tau)=\{e, \tau, \tau^{-1}\}$. The corresponding $S$-matrix is \begin{align}\label{eq:s-s3}
S = \frac{1}{6} \left(\begin{array}{ccc|cc|cccc}
      1 & 1 & 2 &  3 &  3 &  2 & 2 & 2   \\
      1 & 1 & 2 & -3 & -3 &  2 & 2 & 2  \\
      2 & 2 & 4  & 0 & 0 &  -2 & -2 & -2  \\
      \hline
      3 & -3 & 0 &  3 & -3 &  0 & 0 & 0 \\
      3 & -3 & 0  & -3 & 3  & 0 & 0 & 0 \\
      \hline
      2 & 2 & -2  & 0 & 0  & 4 & -2 & -2 \\
      2 & 2 & -2  & 0 & 0  & -2 & -2 & 4 \\
      2 & 2 & -2  & 0 & 0  & -2 & 4 & -2
\end{array}\right),
\end{align}
and then using the Verlinde formula~\eqref{eq:verlinde} the fusion rules can be computed.

{\small
\begin{align*}
\begin{array}{|c|ccc|cc|ccc|}
\hline
  \otimes    & A & B & C  & D & E  & F & G & H   \\
\hline
 A   & {\scriptstyle A} & {\scriptstyle B} & {\scriptstyle C} & {\scriptstyle D} & {\scriptstyle E} &  {\scriptstyle F} & {\scriptstyle G} & {\scriptstyle H}   \\
 B    & {\scriptstyle B} & {\scriptstyle A} & {\scriptstyle C} & {\scriptstyle E} & {\scriptstyle D} &  {\scriptstyle F} & {\scriptstyle G} & {\scriptstyle H}  \\
 C    & {\scriptstyle C} & {\scriptstyle C} & {\scriptstyle A\oplus B\oplus C}  & {\scriptstyle D\oplus E} & {\scriptstyle D\oplus E} &  {\scriptstyle G\oplus H} & {\scriptstyle F\oplus H} & {\scriptstyle F\oplus G}  \\
\hline
 D    & {\scriptstyle D} & {\scriptstyle E} & {\scriptstyle D\oplus E} &  {\scriptstyle A\oplus C\oplus F\oplus G\oplus H} & {\scriptstyle B\oplus C\oplus F\oplus G\oplus H} &  {\scriptstyle D\oplus E} & {\scriptstyle D\oplus E} &
 {\scriptstyle D\oplus E} \\
 E    & {\scriptstyle E} & {\scriptstyle D} & {\scriptstyle D\oplus E}  & {\scriptstyle B\oplus C\oplus F\oplus G\oplus H} & {\scriptstyle A\oplus C\oplus F\oplus G\oplus H}  & {\scriptstyle D\oplus E} & {\scriptstyle D\oplus E} & {\scriptstyle D\oplus E} \\
\hline
 F    & {\scriptstyle F} & {\scriptstyle F} & {\scriptstyle G\oplus H}  & {\scriptstyle D\oplus E} & {\scriptstyle D\oplus E}  & {\scriptstyle A\oplus B\oplus F} & {\scriptstyle H\oplus C} & {\scriptstyle G\oplus C} \\
 G    & {\scriptstyle G} & {\scriptstyle G} & {\scriptstyle F\oplus H}  & {\scriptstyle D\oplus E} & {\scriptstyle D\oplus E}  & {\scriptstyle H\oplus C} & {\scriptstyle A\oplus B\oplus G} & {\scriptstyle F\oplus C} \\
 H    & {\scriptstyle H} & {\scriptstyle H} & {\scriptstyle F\oplus G}  & {\scriptstyle D\oplus E} & {\scriptstyle D\oplus E}  & {\scriptstyle G\oplus C} & {\scriptstyle F\oplus C} & {\scriptstyle A\oplus B\oplus H} \\
\hline
\end{array}
\end{align*}
}

The are several symmetries in this fusion table. In particular, by exchanging $C$ and $F$ we obtain the same table. This fact can also be seen from the $S$-matrix~\eqref{eq:s-s3}. If we let $P$ to be the permutation matrix corresponding to the transposition $(C, F)$, then $PSP^{-1}=S$. On the other hand, by the Verlinde formula the fusion rules are computed in terms of $S$, so since $S$ is invariant under $P$, the fusion rules are also symmetric with respect to the transposition $(C, F)$. In Section~\ref{sec:3} we prove that this symmetry can be extended to an auto-equivalence of the whole category $\mathcal{Z}(G)$, i.e., the braidings are also symmetric.

\begin{figure}
\centerline{
\includegraphics[width=3.2in]{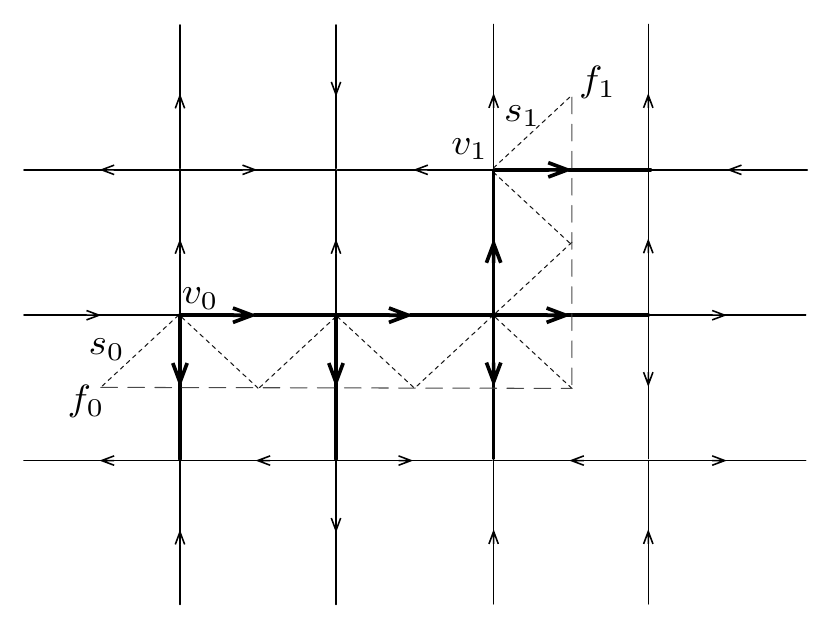}
\kern 0.6 in
\includegraphics[width=1.4in]{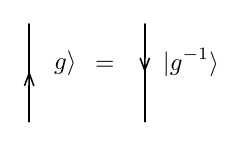}
}
\caption{A planer square lattice with directed edges. The direction of an edge can be reversed by changing the corresponding state according to the right figure. The pair of adjacent vertex $v_0$ and face $f_0$ consist a site. This site $s_0=(v_0, f_0)$ is depicted as a dotted line. A ribbon connecting two sites $s_0$ and $s_1$ is also shown. The corresponding ribbon operator acts on bold edges and is defined in Figure~\ref{fig:ribbon}.}
\label{fig:lattice}
\end{figure}

\section{The quantum double model}\label{sec:kitaev}

In this section we briefly discuss the main ingredients of the quantum double model \cite{kitaev}. For a detailed description we refer the reader to the original paper and \cite{bombin}.

\subsection{Kitaev's Hamiltonian}\label{sec:hamiltonian}

Consider a planar lattice with directed edges. We associate to each edge the Hilbert space $\mathbb{C}G$ with the orthonormal basis $\{\vert g\rangle:\, g\in G\}$ where $G$ is a finite group. For simplicity of presentation we assume that the direction of an edge can be reversed, and in this case we change the vector corresponding to that edge by sending $\vert g\rangle $ to $\vert g^{-1}\rangle$ (and extending linearly). A pair $s=(v, f)$ of adjacent vertex $v$ and face $f$ is called a \emph{site} and is depicted by a dotted line as in Figure~\ref{fig:lattice}. For any site $s=(v, f)$ we define the operators $A_s^g$  and $B_s^h$, $g,h\in G$, according to Figure~\ref{fig:a-b}. Observe that $A_s^g$ depends only on the vertex $v$ and not $f$, and furthermore, if $h$ is in the center of $G$, $B_s^h$ is independent of $v$. So if there is no ambiguity $A_s^g$ and $B_s^h$ are denoted by $A_v^g$ and $B_f^h$, and are called the vertex and face operators, respectively. The following relations are easy to verify:
\begin{align*}
A_s^g A_s^{g'} = A_s^{gg'},\\
(A_s^g)^{\dagger} = A_s^{(g^{-1})},\\
B_s^hB_s^{h'} = \delta_{h,h'} B_s^h,\\
(B_s^h)^{\dagger} = B_s^h,\\
A_s^gB_s^h = B_s^{(ghg^{-1})}A_s^g.
\end{align*}
These equations show that $gh^{\ast} \mapsto A_s^gB_s^h$ gives an isomorphism between the quantum double $D(G)$ and the algebra of operators acting on site $s$.

\begin{figure}
\centering{
\includegraphics[width=3 in]{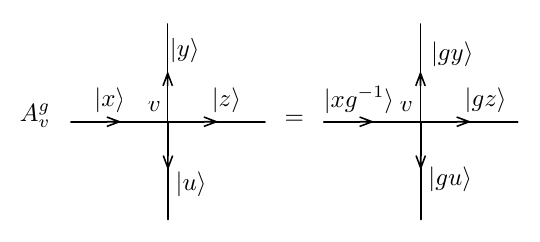}\\
\includegraphics[width=3.3 in]{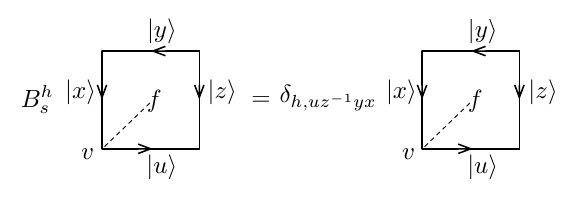}
}
\caption{Definition of operators $A_s^{g}$ and $B_s^{h}$. Note that if $h$ is in the center of $G$, $B_s^h$ depends only on the face $f$ and not vertex $v$. }
\label{fig:a-b}
\end{figure}

Observe that, for different sites $s\neq s'$ we have $[A_s^g, A_{s'}^{g'}]= [A_s^{g}, B_{s'}^{h'}]=[B_s^h, B_{s'}^{h'}]=0$. Define
$$A_v=A_s=\frac{1}{\vert G\vert} \sum_{g\in G} A_s^g$$
and
$$B_f=B_s= B_s^e.$$
Operators $A_v$ and $B_f$ are projections and pairwise commute. Then consider the Hamiltonian
\begin{align*}
H_G= - \sum_v A_v -\sum_f B_f,
\end{align*}
where the summations run over all vertices $v$ and faces $f$. Since all terms of the Hamiltonian commute, the ground state of $H_G$ is a state $\vert \psi\rangle$ such that $A_v\vert \psi\rangle = B_f\vert \psi\rangle =\vert \psi\rangle$. For a planar lattice the ground state is unique and can be explicitly computed~\cite{kitaev}. Nevertheless, here we are interested in elementary excitations.

\subsection{Ribbon operators}

A ribbon $\xi$ in the lattice is a sequence of ``adjacent" sites connecting two sites $s_0$ and $s_1$ as in Figure~\ref{fig:lattice}. From now on we always assume that $s_0$ is the \emph{starting} site of $\xi$ and $s_1$ is the \emph{ending} site. (We also assume that ribbons avoid self-crossing.) For any ribbon $\xi$ and $g,h\in G$ the \emph{ribbon operator} $F_{\xi}^{h,g}$ is defined as in Figure~\ref{fig:ribbon}. It is easy to see that
\begin{align}
F_{\xi}^{h,g} F_{\xi}^{h',g'} &=\delta_{g,g'} F_{\xi}^{hh',g}\label{eq:f-multi},\\
(F_{\xi}^{h,g})^{\dagger} &=F_{\xi}^{h^{-1},g}.
\end{align}
Moreover, for every site $t$ different from $s_0$ and $s_1$, $[F_{\xi}^{h,g}, A_t^{k}]=[F_{\xi}^{h,g}, B_t]=0$, and we have
\begin{align}
A_{s_0}^k F_{\xi}^{h,g} &= F_{\xi}^{khk^{-1}, kg} A_{s_0}^k \label{eq:com-0-a},\\
B_{s_0}^{k} F_{\xi}^{h,g} &= F_{\xi}^{h,g} B_{s_0}^{kh}, \label{eq:com-0-b}
\end{align}
and
\begin{align}
A_{s_1}^k F_{\xi}^{h,g} &= F_{\xi}^{h, gk^{-1}} A_{s_1}^k \label{eq:com-1-a},\\
B_{s_1}^{k} F_{\xi}^{h,g} &= F_{\xi}^{h,g} B_{s_1}^{g^{-1}h^{-1}gk}. \label{eq:com-1-b}
\end{align}

\begin{figure}
\centering{
\includegraphics[width=5 in]{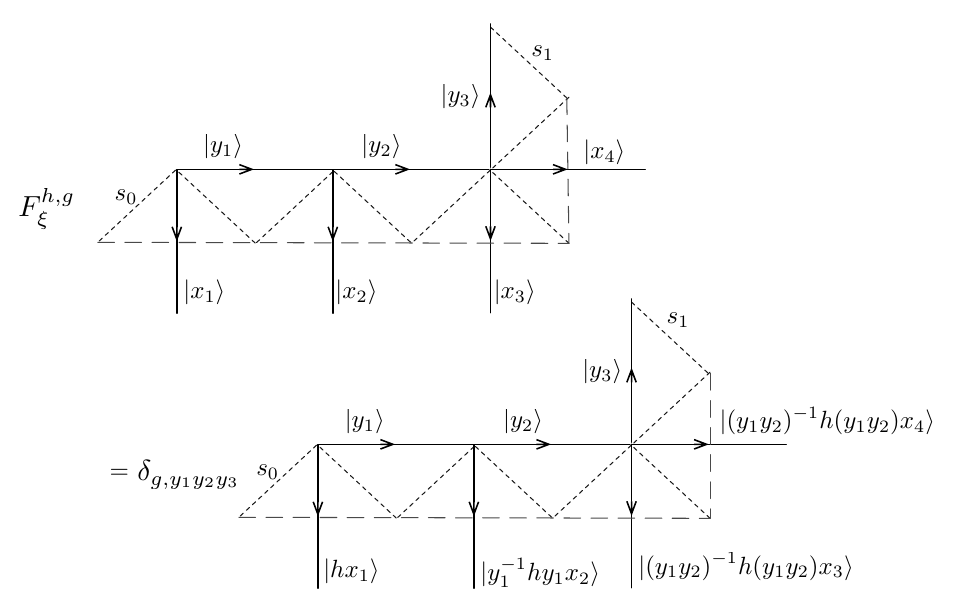}
}
\caption{Definition of the ribbon operator $F_{\xi}^{h,g}$. The ribbon $\xi$ connects the starting site $s_0$ to the ending site $s_1$.}
\label{fig:ribbon}
\end{figure}

\subsection{Elementary excitations}\label{sec:elem-excit}

Let $\vert \psi\rangle$ be the ground state of the Hamiltonian $H_G$. Fix a ribbon $\xi$ which connects two sites $s_0$ and $s_1$. Since $F_{\xi}^{h^{-1},g}$ commutes with all terms of $H_G$ except the terms at sites $s_0$, $s_1$, the state $\vert \psi^{h,g} \rangle = F_{\xi}^{h^{-1},g}\vert \psi\rangle$ satisfies all constraints of the Hamiltonian except the ones at $s_0$ and $s_1$. Moreover, $\vert \psi^{h,g}\rangle$ does not depend on $\xi$, but only on the end points $s_0, s_1$; that is, if $\xi'$ is another ribbon with the same end points, $\vert \psi^{h,g}\rangle=F_{\xi}^{h^{-1},g}\vert \psi\rangle = F_{\xi'}^{h^{-1},g}\vert \psi\rangle$,~\cite{kitaev}. Therefore, applying $F_{\xi}^{h,g}$ on the ground state can be thought of as creating a pair of quasi-particles at sites $s_0$ and $s_1$. Thus movement of such quasi-particles is equivalent to extending the corresponding ribbons, and then their braidings can be defined. Furthermore, to fuse two quasi-particles of this form we can simply move them to the same site. As a result, the set of these quasi-particles describes a system of \emph{anyons}.

Now the question is whether there exists other elementary excited states or not. The space of quasi-particle excitations living at $s_0$ and $s_1$ is equal to 
$$\mathcal{L}(s_0, s_1) = \{\vert v\rangle : \, A_t\vert v\rangle =B_t\vert v\rangle =\vert v\rangle \text{ for all } t\neq s_0, s_1 \}.$$
By the above argument $\vert \psi^{h,g}\rangle$ belongs to $\mathcal{L}(s_0, s_1)$, and it is proved in \cite{kitaev} that  $\mathcal{L}(s_0, s_1)$ is spanned by states $\vert \psi^{h,g}\rangle$. Therefore, all excitations in this system can be obtained by applying ribbon operators on the ground state.
The inner product in this space is computed in the following lemma.

\begin{lem}\label{lem:inner-product} 
\begin{align}\label{eq:inner-product}
\langle \psi^{h,g}\vert \psi^{h',g'}\rangle = \frac{1}{\vert G\vert}\, \delta_{h,h'} \delta_{g,g'}.
\end{align}
\end{lem}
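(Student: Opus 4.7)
The plan is to reduce the inner product to a single ground-state expectation value using the ribbon-operator algebra, and then to pin that value down with the stabilizer constraints $A_v|\psi\rangle = B_f|\psi\rangle = |\psi\rangle$. Starting from the definition $|\psi^{h,g}\rangle = F_\xi^{h^{-1},g}|\psi\rangle$, the adjoint rule $(F_\xi^{h,g})^\dagger = F_\xi^{h^{-1},g}$ together with the multiplication rule~\eqref{eq:f-multi} yields
$$\langle \psi^{h,g}|\psi^{h',g'}\rangle = \langle\psi|F_\xi^{h,g}F_\xi^{h'^{-1},g'}|\psi\rangle = \delta_{g,g'}\,\langle\psi|F_\xi^{hh'^{-1},g}|\psi\rangle,$$
so the lemma is equivalent to showing $\langle\psi|F_\xi^{k,g}|\psi\rangle = \delta_{k,e}/\vert G\vert$ for every $k,g\in G$.

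For $k\neq e$, I would insert $B_{s_0}=B_{s_0}^e$ on the left (it stabilizes $|\psi\rangle$) and commute it through using~\eqref{eq:com-0-b}, obtaining
$$\langle\psi|F_\xi^{k,g}|\psi\rangle = \langle\psi|B_{s_0}^e F_\xi^{k,g}|\psi\rangle = \langle\psi|F_\xi^{k,g}B_{s_0}^k|\psi\rangle.$$
The identity $B_{s_0}^kB_{s_0}^e = \delta_{k,e}B_{s_0}^k$ then forces $B_{s_0}^k|\psi\rangle = \delta_{k,e}|\psi\rangle$, so the right-hand side vanishes whenever $k\neq e$. For $k=e$, an analogous calculation inserting $A_{s_1}$ and using~\eqref{eq:com-1-a}, together with the fact that $A_{s_1}$ is the projector onto the simultaneous $A_{s_1}^{k'}$-invariant subspace (so $A_{s_1}^{k'}|\psi\rangle = |\psi\rangle$ for every $k'$), and reindexing the resulting sum $\sum_{k'}$, shows that $c := \langle\psi|F_\xi^{e,g}|\psi\rangle$ is independent of $g$.

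To pin down the constant $c$, I would introduce $T := \sum_{g\in G}F_\xi^{e,g}$. It is Hermitian by $(F_\xi^{e,g})^\dagger = F_\xi^{e,g}$ and idempotent by~\eqref{eq:f-multi}; and a short reindexing using each of~\eqref{eq:com-0-a}--\eqref{eq:com-1-b} shows that $T$ commutes with every $A_s^k, B_s^k$ at $s_0$ and $s_1$ (and trivially with operators at every other site), and hence with the full Hamiltonian $H_G$. Since the planar ground space is one-dimensional, $T|\psi\rangle = \lambda|\psi\rangle$ with $\lambda\in\{0,1\}$, and by construction $\lambda = \vert G\vert\, c$. The main obstacle I anticipate is ruling out $\lambda = 0$; my intended route is a dimension count. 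Since the states $|\psi^{h,g}\rangle$ span $\mathcal{L}(s_0,s_1)$ (as recalled just before the lemma) and $\dim\mathcal{L}(s_0,s_1) = \vert G\vert^2$ (from $\sum_X d_X^2 = \vert G\vert^2$ over the irreps of $D(G)$), the $\vert G\vert^2$ vectors $|\psi^{h,g}\rangle$ must be a basis; the Gram matrix computed above then equals $c\cdot I$, and its nonsingularity forces $c\neq 0$, whence $c = 1/\vert G\vert$ as required.
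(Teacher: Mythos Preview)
Your reduction and the first two steps (the $\delta_{g,g'}$ factor, the vanishing for $k\neq e$, and the $g$-independence for $k=e$) match the paper's proof almost exactly; the only cosmetic difference is that you insert $B_{s_0}^e$ while the paper inserts $B_{s_1}^e$, and both work.

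Where you diverge is in pinning down the constant $c$. The paper simply observes that $\sum_{g\in G} F_\xi^{e,g}$ is the \emph{identity operator}: from the explicit definition in Figure~\ref{fig:ribbon}, when $h=e$ every multiplicative edge action is trivial, and the remaining factors are diagonal projectors which sum to $I$ when $g$ ranges over $G$. Hence $\sum_g \langle\psi|F_\xi^{e,g}|\psi\rangle=\langle\psi|\psi\rangle=1$, so $c=1/|G|$ immediately. Your route---showing $T=\sum_g F_\xi^{e,g}$ is a Hermitian projector commuting with $H_G$, deducing $T|\psi\rangle=\lambda|\psi\rangle$ with $\lambda\in\{0,1\}$, and then ruling out $\lambda=0$ by a dimension count on $\mathcal{L}(s_0,s_1)$---is valid in spirit but considerably more elaborate, and carries a circularity risk: in this paper the identification $\dim\mathcal{L}(s_0,s_1)=|G|^2$ is argued in Section~\ref{sec:elem-excit} \emph{after} Lemma~\ref{lem:inner-product}, via the map $|\psi^{h,g}\rangle\mapsto h^{\ast}g$ to the regular representation, which tacitly uses the linear independence that Lemma~\ref{lem:inner-product} itself supplies. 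You could repair this by citing an independent source for the dimension (e.g.\ the direct computation in~\cite{kitaev} or~\cite{bombin}), but it is far cleaner to just note $T=I$.
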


\begin{proof}
$\langle \psi^{h,g}\vert \psi^{h',g'}\rangle = \langle \psi \vert (F_{\xi}^{h^{-1}, g})^{\dagger} F_{\xi}^{h'^{-1}, g'} \vert \psi\rangle
= \delta_{g, g'} \langle \psi \vert F_{\xi}^{hh'^{-1}, g} \vert \psi \rangle.$
So it suffices to show that 
\begin{align}\label{eq:inner-product-f-4}
\langle \psi \vert F_{\xi}^{h, g} \vert \psi\rangle = \frac{1}{\vert G\vert}\delta_{h, e}.
\end{align}
Using \eqref{eq:com-1-b} we have
\begin{align*}
\langle \psi \vert F_{\xi}^{h, g} \vert \psi\rangle  = \langle \psi \vert B_{s_1}^e  F_{\xi}^{h, g} \vert \psi\rangle 
 = \langle \psi \vert F_{\xi}^{h, g} B_{s_1}^{g^{-1}h^{-1}g} \vert \psi\rangle.  
\end{align*}
Thus $\langle \psi \vert F_{\xi}^{h, g} \vert \psi\rangle  = 0$ if $h\neq e$. 
Now by \eqref{eq:com-1-a} and $A_{s_1}^k A_{s_1} = A_{s_1}$ we obtain
\begin{align*}
\langle \psi \vert F_{\xi}^{e, g} \vert \psi\rangle  & =  \langle \psi \vert A_{s_1}^k F_{\xi}^{e, g} \vert \psi\rangle \\
& = \langle \psi \vert F_{\xi}^{e, gk^{-1}} A_{s_1}^k \vert \psi\rangle\\
& = \langle \psi \vert F_{\xi}^{e, gk^{-1}} \vert \psi\rangle.   
\end{align*}
Then for every $g, g'$ we have $\langle \psi \vert F_{\xi}^{e, g} \vert \psi\rangle = \langle \psi \vert F_{\xi}^{e, g'} \vert \psi\rangle  $. On the other hand, $\sum_{g\in G} F_{\xi}^{e, g} $ is the identity operator, so we are done. 

\end{proof}

\subsection{Anyon-types}

The only remaining question is to find different types of anyons. Let $t$ be a site different from $s_0, s_1$, and let $\zeta$ be a \emph{closed ribbon} which encircles $s_0$ but not $s_1$, and both of whose end points are $t$. To characterize an unknown excitation sitting at $s_0$, we can create a particle-antiparticle pair at $t$, move one of them along $\zeta$ and rotate it around $s_0$, and finally by measuring the vertex and face operators at $t$ check whether the pair (after braiding) fuses to vacuum or not. We can identify the anyon at $s_0$ by repeating this process for different particle-antiparticle pairs that we create at $t$. 

Of course, if there is no excitation at $s_0$, we expect that after the rotation, the particle-antiparticle pair always fuses to vacuum. Mathematically, creating this pair and moving one of them, correspond to applying some ribbon operator along $\zeta$. Moreover, if the pair fuses to vacuum, this ribbon operator should not create any excitation at $t$. This means that, besides the vertex and face operators along $\zeta$, this ribbon operator must commute with $A_t$ and $B_t$ as well. 
Letting $\mathcal{F}_{\zeta}$ to be the algebra of ribbon operators $F_{\zeta}^{h,g}$, $g, h\in G$, we conclude that the anyon-types are characterized by the subalgebra $\mathcal{K}_{\zeta} \subseteq \mathcal{F}_{\zeta}$ of ribbon operators which commute with $A_t$ and $B_t$:
\begin{align*}
\mathcal{K}_{\zeta} = \{T\in \mathcal{F}_{\zeta}:\, [T, A_t] = [T, B_t]=0  \}.
\end{align*}
In \cite{bombin} for any irreducible representation $X$ of $D(G)$ a ribbon operator $T^{X} \in \mathcal{K}_{\zeta}$ is defined, and it is proved that $\mathcal{K}_{\zeta}$ is generated by these operators and the following equations hold:
\begin{align*}
(T^{X})^{\dagger} &= T^X,\\
T^X T^{Y} & = \delta_{X, Y} T^{X},\\
\sum_X T^X & = I.
\end{align*}
As a result, anyon-types are in one-to-one correspondence with irreps of $D(G)$. Indeed, the set of projections $T^X$ decompose the space of excitations 
\begin{align}\label{eq:x-decomp-l}
\mathcal{L}(s_0, s_1) = \bigoplus_X T^X \mathcal{L}(s_0, s_1),
\end{align}
and $\vert v\rangle \in T^X \mathcal{L}(s_0, s_1)$ is a state of an anyon of type $X$.

The decomposition \eqref{eq:x-decomp-l} can also be derived from another point of view. 
Using the commutation relations \eqref{eq:com-0-a} and \eqref{eq:com-0-b}, it is easy to see that
\begin{align*}
A_{s_0}^k \vert \psi^{h,g}\rangle & =\vert \psi^{khk^{-1}, kg}\rangle, \\
B_{s_0}^{l}\vert \psi^{h,g}\rangle &=\delta_{l, h} \vert \psi^{h,g}\rangle.
\end{align*}
As we mentioned in Section \ref{sec:hamiltonian} the algebra generated by operators $A_{s_0}^k$ and $B_{s_0}^l$ is isomorphic to $D(G)$. Then the above equations define a representation of $D(G)$ on the space $\mathcal{L}(s_0, s_1)$. This representation, however, is equivalent to the regular representation of $D(G)$; that is, by sending $\vert \psi^{h,g}\rangle$ to $h^{\ast}g$, the action of $D(G)$ is given by multiplication from left. On the other hand, decomposing the regular representation of $D(G)$ into irreducible ones, we obtain all irreps of $D(G)$ as a summand, i.e., as representations of $D(G)$ we have
$$\mathcal{L}(s_0, s_1) \simeq  \bigoplus_X \,(\dim X) V_X,$$
where the direct sum runs over irreps of $D(G)$ and $V_X$ is the vector space corresponding to irrep $X$. Again we refer to \cite{bombin} for an explicit description of this isomorphism in terms of basis vectors, which indeed is the same as the decomposition \eqref{eq:x-decomp-l}:  $T^X \mathcal{L}(s_0, s_1)\simeq (\dim X) V_X$. 

As a summary, anyon-types in this model are described by simple objects of $\mathcal{Z}(G)$. Moreover, given a subspace of elementary excitations $W\subseteq \mathcal{L}(s_0, s_1)$, to find the types of anyons in $W$ we can compute the character of the representation $W$ of $D(G)$, and then decompose it into irreducible ones. We will use this method in Sections \ref{sec:boundary-1} and \ref{sec:boundary-2} to find condensed anyons.

\begin{remark} Here the representation of $D(G)$ on the space $\mathcal{L}(s_0, s_1)$ is defined based on the action of the vertex and face operators at site $s_0$, and another representation can be found by considering those operators at $s_1$. However, it is not hard to see that these two representations of $D(G)$ are charge conjugations of each other and it does not matter which representation is picked to find anyon-types. Indeed, the anyon sitting at $s_1$ is the charge conjugation of the one at $s_0$.    
\end{remark}

\section{The quantum double model with boundary I}\label{sec:boundary-1}

In this section we define a boundary for the quantum double model and compute the corresponding condensation. This model will be generalized in the following section. 

Instead of a planar lattice, consider a lattice defined on a half-plane as in Figure~\ref{fig:lattice-boundary}.  Again $\mathbb{C}G$ is the Hilbert space associated with \emph{internal} edges, however, the Hilbert space of the boundary edges is $\mathbb{C}K$ where $K\subseteq G$ is a fixed subgroup. The vertex and face operators corresponding to internal sites are defined as before. But for a boundary site $s=(v,f)$ the vertex operator is
$$A_s^K=\frac{1}{\vert K\vert }\sum_{k\in K} A_s^k.$$
Also the fact that the corresponding Hilbert space to a boundary edge is $\mathbb{C}K \subseteq \mathbb{C}G$, can be captured by considering the projection onto this subspace 
$$B_s^K= \sum_{k\in K} B_s^k.$$
Now define the Hamiltonian
\begin{align} \label{eq:hamiltonian-g-k}
H_{G,K} = -\sum_{v} A_v - \sum_{f} B_f - \sum_{s} (A_s^K + B_s^K),
\end{align}
where the summations run over internal vertices and faces $v, f$ and boundary sites $s$. Observe that similar to $H_G$, all terms of $H_{G, K}$ commute.

By the same reasoning as before the bulk excitations can be created and moved by applying ribbon operators, and they are in one-to-one correspondence with irreps of $D(G)$. Some of these quasi-particles, however, may disappear when they move to the boundary; that is, because the local terms of the Hamiltonian on the boundary are different from internal terms, an excitation which violates some of the internal constraints, may become a ground state when it moves to a boundary site.

\begin{figure}
\centering{
\includegraphics[width=3.2 in]{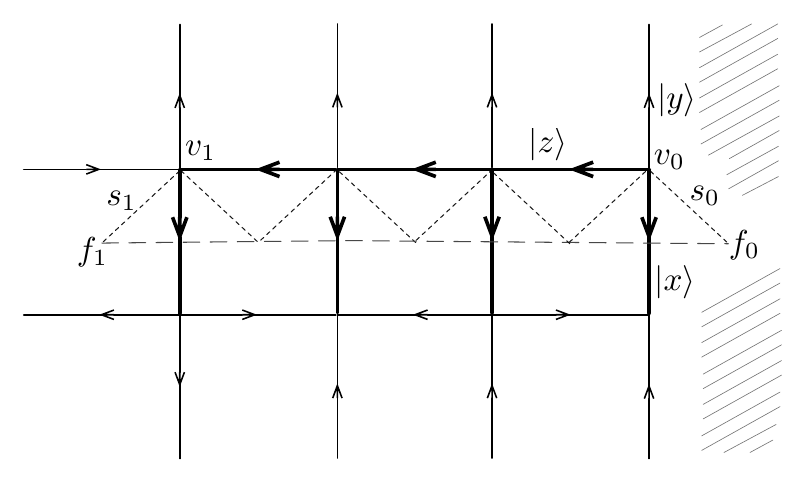}
}
\caption{A planar lattice with boundary. A ribbon connects the boundary site $s_0=(v_0, f_0)$ to the internal site $s_1=(v_1, f_1)$. Here, operators $A_{s_0}^{k}$ and $B_{s_0}^k$ are defined the same as before: $A_{s_0}^k \vert x, y, z\rangle = \vert kx, ky, kz\rangle$ and $B_{s_0}^k \vert x\rangle = \delta_{k, x}\vert x\rangle$.}
\label{fig:lattice-boundary}
\end{figure}

\subsection{Condensations}

Fix a ribbon $\xi$ which connects a boundary starting site $s_0$ to an internal site $s_1$ as in Figure~\ref{fig:lattice-boundary}. Let $\mathcal{C}_{\xi}\subseteq  \mathcal{F}_{\xi}$ to be the subalgebra of operators that commute with both $A_{s_0}^K$ and $B_{s_0}^K$:
\begin{align*}
\mathcal{C}_{\xi} = \{T\in \mathcal{F}_{\xi}:\, [T, A_{s_0}^K] = [T, B_{s_0}^K] =0\}.
\end{align*}
Then for every $T\in \mathcal{C}_{\xi}$, by applying $T$ on the ground state of $H_{G, K}$, we generate a quasi-particle at $s_1$, but no excitation at $s_0$. It means that the quasi-particle at $s_1$ disappears when it moves to the boundary site $s_0$. In this case we say that this excitation gets condensed at the boundary. So to classify condensations we should find the algebra $\mathcal{C}_{\xi}$.   

Let $T = \sum_{h,g} c_{h, g} F_{\xi}^{h,g}$ be in $\mathcal{C}_{\xi}$. Using the commutation relations\footnote{Although these equations are given for an internal site, it is not hard to see that they also hold for boundary sites.} \eqref{eq:com-0-a} and \eqref{eq:com-0-b}, $[T, A_{s_0}^K]=0$ is equivalent to 
$$c_{khk^{-1}, kg} = c_{h,g },$$
for every $k\in K$, and $[T, B_{s_0}^K]=0$ if and only if $c_{h, g} = 0$, for $h\notin K$. These two relations completely characterize $\mathcal{C}_{\xi}$ as follows.

For every $k\in K$ and $g\in G$ define
\begin{align}
T^{k,g} =\sum_{l\in K} F_{\xi}^{lkl^{-1}, lg^{-1}}.
\end{align}

\begin{proposition}\label{prop:rel-gen-1}
The algebra $\mathcal{C}_{\xi}$ is spanned by operators $T^{k,g}$, $k\in K$, $g\in G$, and the following equations hold.
\begin{enumerate}
\item $T^{k, gm} = T^{mkm^{-1}, g}$, for every $m\in K$.
\item $T^{k,g}T^{k',g'}=0$ if $gK\neq g'K$.
\item $T^{k,g}T^{k',g} = T^{kk', g}$.
\item $(T^{k, g})^{\dagger} = T^{k^{-1}, g}$.
\end{enumerate}
\end{proposition}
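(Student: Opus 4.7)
The plan is to reduce to the explicit coefficient description of $\mathcal{C}_\xi$ derived just before the statement, and then to verify each of the four identities by direct manipulation of the ribbon multiplication rule~\eqref{eq:f-multi} and the adjoint formula.

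Writing an arbitrary $T\in\mathcal{F}_\xi$ as $T=\sum_{h,g}c_{h,g}F_\xi^{h,g}$, the conditions $[T,B_{s_0}^{K}]=0$ and $[T,A_{s_0}^{K}]=0$ are equivalent, respectively, to $c_{h,g}=0$ for $h\notin K$ and $c_{khk^{-1},kg}=c_{h,g}$ for every $k\in K$. Hence the coefficient function is supported on $K\times G$ and is constant on the orbits of the left $K$-action $k\cdot(h,g)=(khk^{-1},kg)$. The orbit through $(k,g^{-1})$ is precisely $\{(lkl^{-1},lg^{-1}):l\in K\}$, so $T^{k,g}$ is (up to identification) the indicator of that orbit in $\mathcal{F}_\xi$. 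Since every $K$-orbit arises for some $(k,g)$, these operators span $\mathcal{C}_\xi$.

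For (1), I would substitute $l=l'm$ in the defining sum of $T^{k,gm}$: because $m\in K$, $l'$ still ranges over $K$, and
\[
(l'm)k(l'm)^{-1}=l'(mkm^{-1})l'^{-1},\qquad l'm\cdot(gm)^{-1}=l'g^{-1},
\]
so the sum is recognised as $T^{mkm^{-1},g}$. For (2) and (3), I would expand
\[
T^{k,g}T^{k',g'}=\sum_{l,l'\in K}F_\xi^{lkl^{-1},\,lg^{-1}}\,F_\xi^{l'k'l'^{-1},\,l'g'^{-1}}
\]
and apply~\eqref{eq:f-multi}; the Kronecker factor $\delta_{lg^{-1},\,l'g'^{-1}}$ forces $l^{-1}l'=g^{-1}g'$, which is solvable with $l,l'\in K$ exactly when $g^{-1}g'\in K$, i.e.\ $gK=g'K$. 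This yields (2). When $g=g'$ the constraint collapses to $l=l'$ and the product becomes $\sum_l F_\xi^{l(kk')l^{-1},\,lg^{-1}}=T^{kk',g}$, proving (3). Identity (4) is immediate from $(F_\xi^{h,g})^\dagger=F_\xi^{h^{-1},g}$ applied term by term in the defining sum.

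I do not anticipate any genuine obstacle; the only delicate point is the $g^{-1}$ convention in the second slot of $T^{k,g}$, which is what produces the twist $mkm^{-1}$ (rather than an untwisted right shift) in identity~(1). Everything else is routine bookkeeping on top of~\eqref{eq:f-multi} and the two $K$-invariance constraints on the coefficients.
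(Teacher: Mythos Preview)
Your proposal is correct and is exactly the straightforward verification the paper has in mind; the paper itself omits the proof, saying only that it is left to the reader and that the generalization (Proposition~\ref{prop:rel-tilde}) is proved in the next section. Your orbit description for the spanning argument and your use of~\eqref{eq:f-multi} for the product identities match the specialization $\varphi\equiv 1$ of that later proof.
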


\begin{proof} The proof is straightforward and is left to the reader. (We will prove a generalization of this proposition in the next section.)

\end{proof}

Now in order to find the type of condensed anyons we use the idea of Section \ref{sec:elem-excit} and compute the representation of $D(G)$ induced by $\mathcal{C}_{\xi}$. Let $\vert \psi_K\rangle$ be the ground state of $H_{G,K}$. For $k\in K$ and $g\in G$ define $\vert \psi_K^{k,g}\rangle = T^{k,g} \vert \psi_K\rangle$ and let $\mathcal{A}(K)$ be the span of these vectors. Since the operators $T^{k, g}$ commute with $A_{s_0}^K$ and $B_{s_0}^K$, we have $A_{s_0}^K\vert \psi_K^{k,g}\rangle= B_{s_0}^K\vert \psi_K^{k,g}\rangle= \vert \psi_K^{k,g}\rangle$. But using \eqref{eq:com-1-a} and \eqref{eq:com-1-b}
\begin{align*}
A_{s_1}^h \vert \psi_K^{k,g}\rangle  & = \vert \psi_K^{k, hg}\rangle, \\
B_{s_1}^{h} \vert \psi_K^{k, g}\rangle &  = \delta_{h, gkg^{-1}} \vert \psi_K^{k, g}\rangle, 
\end{align*} 
and we obtain a representation of $D(G)$ on $\mathcal{A}(K)$. To compute the character of this representation we need to fix a basis. Assume that $\vert G\vert / \vert K\vert = r$ and $G = g_1K \cup \dots \cup g_rK$. Then by Proposition \ref{prop:rel-gen-1} the $\vert G\vert$ states $\vert \psi_K^{k, g_i}\rangle$, $k\in K$, $i=1, \dots , r$, span $\mathcal{A}(K)$. We have 
\begin{align}
\langle \psi_K^{k, g_i} \vert \psi_K^{k', g_j}\rangle  & = \langle \psi_K \vert (T^{k, g_i})^{\dagger} T^{k', g_j}\vert \psi_K\rangle  \nonumber\\
& = \delta_{i,j} \langle \psi_K \vert T^{k^{-1}k', g_i}\vert \psi_K\rangle \nonumber \\
& = \delta_{i,j} \sum_{m\in K} \langle \psi_K\vert F_{\xi}^{mk^{-1}k'm^{-1}, mg_i^{-1}}\vert \psi_K \rangle \nonumber \\
& = \frac{\vert K\vert}{\vert G\vert}\, \delta_{i,j} \delta_{k,k'}, \label{eq:in-pro-2}
\end{align}
where in the last line we use \eqref{eq:inner-product-f-4} which still holds even considering a boundary. So $\{\sqrt{r} \vert \psi_K^{k, g_i}\rangle:\,  k\in K, i=1, \dots, r   \}$ is an orthonormal basis for $\mathcal{A}(K)$.

Now we are ready to compute $\chi_{\mathcal{A}(K)}$ the character of the representation $\mathcal{A}(K)$. For $g, h\in G$ we have
\begin{align*}
\chi_{\mathcal{A}(K)} (hg^{\ast}) & =  r \sum_{k\in K} \sum_{i=1}^r \,\langle \psi_K^{k, g_i} \vert hg^{\ast} \vert \psi_K^{k, g_i}\rangle \\
& =  r \sum_{k\in K} \sum_{i=1}^r \, \delta_{g, g_ikg_i^{-1}} \langle \psi_K^{k, g_i} \vert h\vert \psi_K^{k, g_i}\rangle \\
& = r \sum_{k\in K} \sum_{i=1}^r \,  \delta_{g, g_ikg_i^{-1}}  \langle \psi_K^{k, g_i}  \vert \psi_K^{k, hg_i}\rangle.
\end{align*}
Let $hg_i = g_{\epsilon(i)} k_i$ where $k_i\in K$ and $1\leq \epsilon(i)\leq r$. Thus by Proposition \ref{prop:rel-gen-1} we have
\begin{align*}
\chi_{\mathcal{A}(K)} & = r \sum_{k\in K} \sum_{i=1}^r \,  \delta_{g, g_ikg_i^{-1}}  \langle \psi_K^{k, g_i}  \vert \psi_K^{k_ikk_i^{-1}, g_{\epsilon(i)}}\rangle \\
&= r \sum_{k\in K} \sum_{i=1}^r \,  \delta_{g, g_ikg_i^{-1}} \,\delta_{k, k_ikk_i^{-1}} \delta_{i, \epsilon(i)}\frac{\vert K\vert}{ \vert G\vert} \\
& =  \sum_{k\in K} \sum_{i=1}^r \,  \delta_{g_i^{-1}gg_i, k}\, \delta_{g_i^{-1} h g_i , k_i} \delta_{kk_i ,k_ik}   \\
& =  \sum_{k\in K} \sum_{i=1}^r \,  \delta_{g_i^{-1}gg_i, k}\, \delta_{g_i^{-1} h g_i , k_i} \delta_{gh ,hg}   \\
& =   \delta_{gh ,hg} \sum_{i=1}^r \delta_{g_i^{-1} gg_i \in K} \,\delta_{g_i^{-1} h g_i \in K},
\end{align*}
and then 
\begin{align}\label{eq:char-phi=1}
\chi_{\mathcal{A}(K)} (hg^{\ast})  = \frac{1}{\vert K\vert}  \delta_{gh ,hg}\sum_{x\in G} \delta_{xgx^{-1}\in K } \, \delta_{xhx^{-1} \in K}.
\end{align}
To compute the set of condensed anyons we just need to decompose this character into irreducible ones.

Let us give some examples. Let $K=G$. We have 
\begin{align}
\chi_{A(G)}(hg^{\ast}) =  \frac{1}{\vert G\vert } \delta_{gh,hg} \sum_{x\in G}  1 = \delta_{gh, hg}.
\end{align}
Then using \eqref{eq:dg-irreps} it is easy to see that $\chi_{A(G)} = \sum_{\overline{a}} \chi_{(\overline{a}, \mathbf{1})}$. As a result, in this case the condensation corresponds to all fluxions.
Another example is $K=\{e\}$. We have 
$$\chi_{A(\{e\})}(hg^{\ast})= \vert G\vert \delta_{g, e}\delta_{h,e} = \delta_{g,e} \tr_{\rho} (h),$$ 
where $\rho$ denotes the regular representation of $G$. Therefore, $\chi_{A(\{e\})}=\sum_{\pi} \dim \pi\, \chi_{(e, \pi)}$ where the summation is over irreps of $G$, and condensed anyons are all chargeons.

In general, an anyon indexed by $(\overline{a}, \pi)$ is condensed if $\chi_{\mathcal{A}(K)}$ and $\chi_{(\overline{a}, \pi)}$ are not orthogonal: $\langle \chi_{\mathcal{A}(K)}, \chi_{(\overline{a}, \pi)} \rangle >0$ (see \eqref{eq:inner-product-rep} for the definition of the inner product of characters). Expanding $\langle \chi_{\mathcal{A}(K)}, \chi_{(\overline{a}, \pi)} \rangle $, it is easy to see that the condensed anyons in our model coincide with those in the model proposed in \cite{bombin}. Bombin and Martin-Delgado have defined a variation of the quantum double model in which two subgroups $N\subseteq M$ are involved, and described the necessary and sufficient condition for an anyon $(\overline{a}, \pi)$ to be condensed. This condition in the case where $M=N=K$ is the same as our constraint $\langle \chi_{\mathcal{A}(K)}, \chi_{(\overline{a}, \pi)} \rangle >0$ which means that these two models characterize similar condensations.\footnote{Bombin and Martin-Delgado \cite{bombin} have assumed that $N$ is a normal subgroup of $G$, however, it seems that the normality of $N$ in $M$ is enough.}

\begin{remark} In the example of $K=G$ the condensation consists of all fluxions. Since some of these condensed anyons may have non-trivial braidings, we should have confinements in this case. Confinements have been discussed in~\cite{bombin} and classified in some cases. 
\end{remark}

\section{The quantum double model with boundary II}\label{sec:boundary-2}

We now generalize the boundary defined in the previous section. Here besides a subgroup $K$, the boundary depends on a $2$-cocyle of $K$ as well.

\subsection{$2$-cocycles}

Let $\varphi:K\times K \rightarrow \mathbb{C}^{\times}$ be a function such that for every $k,l, m\in K$
\begin{align}\label{eq:cocycle}
\varphi(kl, m) \varphi(k, l) = \varphi(k, lm) \varphi(l, m).
\end{align}
Then  $\varphi$ is called a $2$-cocycle of $K$.  

Every $2$-cocycle comes from a \emph{projective representation} and vise versa.
A representation of $K$ on the vector space $V$ is indeed a homomorphism $K \rightarrow \text{GL}(V)$, where $\text{GL}(V)$ is the group of invertible linear transformations of $V$. A projective representation is a homomorphism 
$$\rho: K \rightarrow \text{PGL}(V),$$ 
where $\text{PGL}(V) = \text{GL}(V)/ \mathbb{C}$ is the quotient of $\text{GL}(V)$ modulo scalers. Thus every representation, by composing with the map 
$\Pi : \text{GL}(V) \rightarrow \text{PGL}(V),$ 
gives a projective representation, but the converse does not hold. However, a projective representation provides us with a $2$-cocycle: 
for every $k\in K$ fix $L(k) \in \text{GL}(V)$ such that $\Pi(L(k)) = \rho(k)$ ($L$ is a lifting of $\rho$). So $\Pi(L(kl)) = \Pi ( L(k)L(l) )$ and then there exists $\varphi(k, l) \in \mathbb{C}$ such that 
\begin{align}\label{eq:lifting}
\varphi(k, l) L(kl) =  L(k) L(l).
\end{align} 
It is easy to see that this function $\varphi$ satisfies \eqref{eq:cocycle} and is a $2$-cocycle.

Conversely, every $2$-cocycle $\varphi$ corresponds to a projective representation. Let $V=\mathbb{C}K$ and define $L(k) \vert l\rangle = \varphi (k, l) \vert kl \rangle$. Then  $\rho(k) = \Pi(L(k))$ gives a projective representation and~\eqref{eq:lifting} holds.

Let $\alpha: K \rightarrow \mathbb{C}^{\times}$ be an arbitrary function. Then $L'(k) = \alpha(k) L(k)$ also satisfies $\Pi(L'(k)) = \rho(k)$, and defines another $2$-cocycle $\varphi'$ corresponding to the same projective representation. We call such two $2$-cocycles $\varphi$ and $\varphi'$ \emph{equivalent}. More precisely, $\varphi$ and $\varphi'$ are equivalent if there exists $\alpha$ such that 
$$\varphi'(k,l) = \alpha{(kl)}^{-1} \alpha(k)\alpha(l)\varphi(k, l).$$ 
The set of $2$-cocycles of $K$ up to the above equivalency is denoted by $H^{2}(K, \mathbb{C}^{\times})$.

\begin{lem} \label{lem:2-cocycle}
Every $2$-cocycle is equivalent to one with the following properties.
\begin{enumerate}
\item $\varphi(e, k) = \varphi(k, e) =1$
\item $\varphi(k, k^{-1}) =1$
\item $\vert \varphi(k, l) \vert  =1$
\item $\varphi (k^{-1}, l^{-1}) = \varphi(l, k)^{-1}$.
\end{enumerate}

\end{lem}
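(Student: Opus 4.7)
The strategy is to establish the four properties in sequence, at each stage modifying $\varphi$ by a coboundary $\delta\alpha(k,l)=\alpha(k)\alpha(l)\alpha(kl)^{-1}$ chosen so as not to undo what has already been arranged.

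First I would handle~(1). Setting $l=e$ in~\eqref{eq:cocycle} gives $\varphi(k,e)=\varphi(e,m)$ for all $k,m\in K$, so both equal a common constant $c=\varphi(e,e)$; rescaling by $\alpha(e)=c^{-1}$ and $\alpha(k)=1$ otherwise forces $\varphi(e,k)=\varphi(k,e)=1$. Next I would establish~(3). Let $n=\vert K\vert$ and $\beta(g)=\prod_{h\in K}\varphi(g,h)$. Multiplying the cocycle identity $\varphi(g_1g_2,h)\varphi(g_1,g_2)=\varphi(g_1,g_2h)\varphi(g_2,h)$ over all $h\in K$ and using the bijection $h\mapsto g_2h$ yields $\beta(g_1g_2)\varphi(g_1,g_2)^n=\beta(g_1)\beta(g_2)$, that is, $\varphi^n=\delta\beta$. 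Since $\mathbb{C}^{\times}$ is divisible I can pick $\alpha$ with $\alpha^n=\beta^{-1}$; since $\beta(e)=1$ by~(1), I may further demand $\alpha(e)=1$. Then $\varphi'=\delta\alpha\cdot\varphi$ satisfies $(\varphi')^n=\delta(\alpha^n\beta)=\delta(1)=1$, so $\varphi'$ takes values in $n$-th roots of unity; this gives~(3) and preserves~(1).

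For~(2) I would apply a third coboundary $\alpha_3$ subject to $\vert\alpha_3\vert=1$ (to preserve~(3)) and $\alpha_3(e)=1$ (to preserve~(1)), solving $\alpha_3(k)\alpha_3(k^{-1})=\varphi(k,k^{-1})^{-1}$. This equation is self-consistent: its $k\leftrightarrow k^{-1}$ version reads $\alpha_3(k^{-1})\alpha_3(k)=\varphi(k^{-1},k)^{-1}$, and~\eqref{eq:cocycle} at $(k,k^{-1},k)$ combined with~(1) gives $\varphi(k,k^{-1})=\varphi(k^{-1},k)$. Partition $K\setminus\{e\}$ into inverse-pairs $\{k,k^{-1}\}$ with $k\neq k^{-1}$ and self-inverse singletons $\{k\}$ with $k^2=e$. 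On each inverse-pair, pick $\alpha_3(k)$ freely on the unit circle and solve for $\alpha_3(k^{-1})$; on each self-inverse $k\neq e$, take $\alpha_3(k)$ to be a unit square root of $\varphi(k,k)^{-1}$. This is the main obstacle of the proof: such a unit square root exists only because~(3) is already in force, guaranteeing $\vert\varphi(k,k)\vert=1$---which is why~(3) must be done before~(2) and not the other way around.

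Finally,~(4) follows from~(1) and~(2) purely algebraically. Applying~\eqref{eq:cocycle} to $(k^{-1},k,l)$ and using~(1),~(2) yields $\varphi(k^{-1},kl)=\varphi(k,l)^{-1}$. Applying it to $(k^{-1},l^{-1},lk)$, and using $\varphi((lk)^{-1},lk)=1$ together with $\varphi(k^{-1},k)=1$ (a special case of the first identity with $l=e$), gives $\varphi(k^{-1},l^{-1})=\varphi(l^{-1},lk)$. Substituting the first identity with $k\to l$, $l\to k$ into the right-hand side yields $\varphi(k^{-1},l^{-1})=\varphi(l,k)^{-1}$, as required.
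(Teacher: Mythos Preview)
Your argument is correct, but it takes a genuinely different route from the paper's.

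The paper exploits the projective--representation interpretation directly: given the projective representation $\rho$ associated with $\varphi$, it chooses a lifting $L:K\to \mathrm{GL}(V)$ with $L(e)=I$, $L(k^{-1})=L(k)^{-1}$, and $\det L(k)=1$. Then all four properties drop out of the defining relation $\varphi(k,l)L(kl)=L(k)L(l)$: (1) and (2) are immediate from $L(e)=I$ and $L(k^{-1})=L(k)^{-1}$; (3) follows by taking determinants ($\varphi(k,l)^{\dim V}=1$); and (4) follows by inverting $L(l)L(k)=\varphi(l,k)L(lk)$ and comparing with $L(k^{-1})L(l^{-1})$.

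You instead stay at the cochain level, normalizing $\varphi$ by successive coboundaries and keeping track that each stage preserves the previous ones. Your route is more elementary and self--contained: it does not need the projective--representation language at all, and it makes explicit the standard fact that $\varphi^{\vert K\vert}=\delta\beta$, hence every cocycle of a finite group is cohomologous to one with root--of--unity values. The paper's route is shorter and more conceptual once the projective setup is in place (which it is, in the paragraphs preceding the lemma), and it handles all four properties in one stroke rather than sequentially; on the other hand it leaves to the reader the check that a lifting with all three constraints exists simultaneously, whereas your stagewise construction is fully explicit.
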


\begin{proof}
Every $2$-cocycle corresponds to a projective representation. Consider a lifting of this projective representation such that $L(e)= I$, $L(k^{-1}) = L(k)^{-1}$, and $det\, L(k)=1$ for every $k\in K$. Then the above equations follow from \eqref{eq:lifting}.

\end{proof}

For simplicity from now on we assume that $\varphi$ satisfies the properties of this lemma.

\begin{figure}
\centering{
\includegraphics[width=3.2 in]{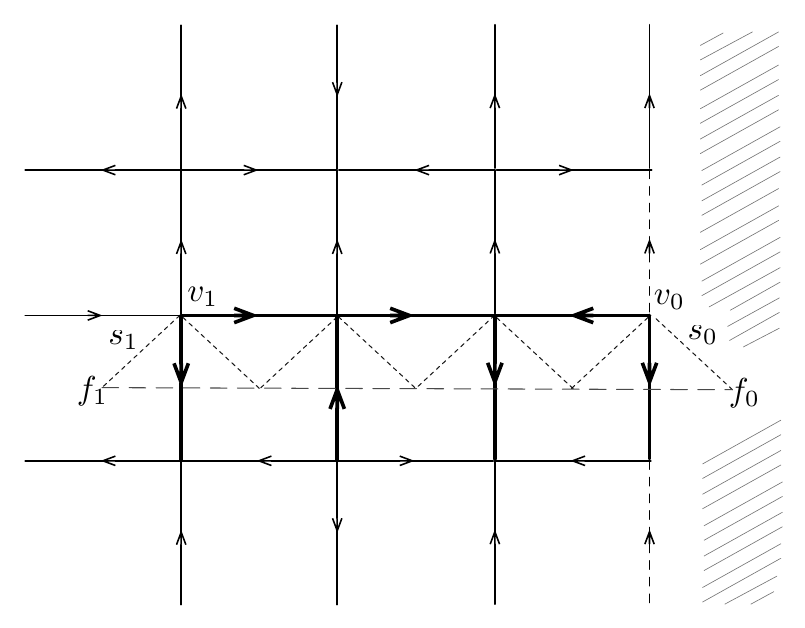}
}
\caption{A lattice with boundary in which every other boundary edge is marked by a dotted line. Here we fix a boundary site $s_0=(v_0, f_0)$ where $f_0$ is adjacent to a solid boundary edge.}
\label{fig:lattice-boundary-tilde}
\end{figure}

\subsection{A new boundary}

Consider a lattice on a half-plane as before. The vertex and face operators for internal sites remain unchanged. Also, the state of the boundary edges lives in the space $\mathbb{C}K$ which as before can be captured by adding the projections $B_s^K$ (for boundary sites $s$) to the Hamiltonian. However, we need to change the vertex operators at the boundary.

We assume that at the boundary, every other edge is marked by a dotted line as in Figure~\ref{fig:lattice-boundary-tilde}. Thus, for every boundary site there are three adjacent edges: (1) an internal edge, (2) a boundary \emph{solid} edge, (3) and a boundary \emph{dotted} edge. Then for every $k\in K$ and boundary site $s$ define the vertex operator $\widetilde{A}_s^{k}$ as in Figure~\ref{fig:a-tilde}. The equality $\widetilde{A}_s^k \widetilde{A}_s^l = \widetilde{A}_s^{kl}$ can be proved using \eqref{eq:cocycle}.\footnote{This is the equation that motivates us to have two types of boundary edges; if all the boundary edges were solid, we had $\widetilde{A}_s^k \widetilde{A}_s^l = \varphi(k, l)^2 \widetilde{A}_s^{kl}$.} Furthermore, using Lemma~\ref{lem:2-cocycle} we have $(\widetilde{A}_s^{k})^{\dagger} = \widetilde{A}_s^{k^{-1}}$. Then 
\begin{align*}
\widetilde{A}_s^K = \frac{1}{\vert K\vert}\sum_{k\in K} \widetilde{A}_{s}^k 
\end{align*}
is a projection.  
Now define the Hamiltonian 
\begin{align}\label{eq:hamiltonian-g-k-tilde}
\widetilde{H}_{G, K} =  - \sum_v A_v - \sum_f B_f -\sum_{s} (\widetilde{A}_s^K + B_s^K),
\end{align}
where the summations run over internal vertices and faces $v$, $f$ and boundary sites $s$. The ground state of this Hamiltonian is denoted by $\vert \widetilde{\psi}_K\rangle$. The bulk excitations as before, are created by applying ribbon operators and are labeled by irreps of $D(G)$. Condensations, however, are different because the boundary terms have been changed.

\begin{figure}
\centering{
\includegraphics[width=3.5 in]{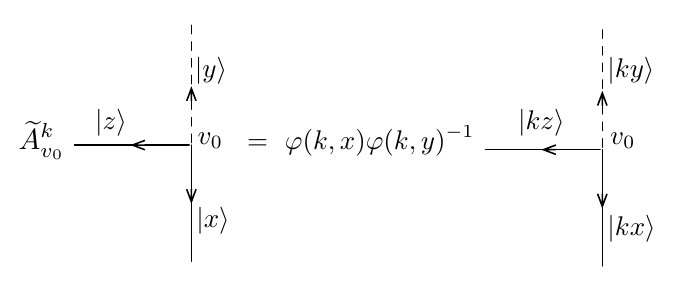}
}
\caption{Definition of $\widetilde{A}_{v_0}^k$ for $k\in K$ and boundary vertex $v_0$. Here we assume that $x, y$ are in $K$.}
\label{fig:a-tilde}
\end{figure}

\subsection{Condensations II}

In this section we fix a ribbon $\xi$ that connects a boundary site $s_0$ to an internal site $s_1$. Then the same as before we consider the subalgebra of ribbon operators that commute with $\widetilde{A}_{s_0}^K$ and $B_{s_0}^K$, and compute the corresponding representation of $D(G)$ in order to find the excitations that get condensed at $s_0$. Nevertheless, because we have changed the definition of the boundary terms, the ribbon operators when acting on boundary edges, must also be modified. 

From now on assume that the boundary edge corresponding to site $s_0$ is solid (see Figure~\ref{fig:lattice-boundary-tilde}). Then for $k\in K$ and $g\in G$ define the ribbon operator $\widetilde{F}_{\xi}^{k, g}$ as in Figure~\ref{fig:ribbon-tilde}. Observe that 
\begin{align}\label{eq:multi-f-tilde}
\widetilde{F}_{\xi}^{k, g} \widetilde{F}_{\xi}^{k', g'} & = \delta_{g, g'}  \varphi(k, k') \widetilde{F}_{\xi}^{kk', g},\\
(\widetilde{F}_{\xi}^{k, g})^{\dagger} &=\widetilde{F}_{\xi}^{k^{-1}, g}.
\end{align}
$\widetilde{F}_{\xi}^{k, g}$ commutes with $A_t$ and $B_t$ for every internal site $t\neq s_1$, and the commutation relations with $A_{s_1}^{h}$ and $B_{s_1}^h$ are as before
\begin{align}
A_{s_1}^h \widetilde{F}_{\xi}^{k,g} &= \widetilde{F}_{\xi}^{k, gh^{-1}} A_{s_1}^h \label{eq:com-1-a-tilde},\\
B_{s_1}^{h} \widetilde{F}_{\xi}^{k,g} &= \widetilde{F}_{\xi}^{k,g} B_{s_1}^{g^{-1}k^{-1}gh}, \label{eq:com-1-b-tilde}
\end{align}
and for $l\in K$
\begin{align}\label{eq:com-0-b-tilde}
B_{s_0}^l \widetilde{F}_{\xi}^{k, g} = \widetilde{F}_{\xi}^{k, g} B_{s_0}^{lk}.
\end{align}
However, we have 
\begin{align}\label{eq:com-0-a-tilde}
\widetilde{A}_{s_0}^l \widetilde{F}_{\xi}^{k, g} = \varphi(lk, l^{-1}) \varphi(l, k) \widetilde{F}_{\xi}^{lkl^{-1}, lg} \widetilde{A}_{s_0}^l.
\end{align}

\begin{remark} It is shown in~\cite{bombin} that the ribbon operators are indeed certain extensions of the local operators. Considering the extra phases in the definition of $\widetilde{A}_{s_0}^k$, the same extension gives the definition of $\widetilde{F}_{\xi}^{k,g}$, which comparing to $F_{\xi}^{k, g}$ has an extra phase.
\end{remark}

\begin{remark}
We defined the operators $\widetilde{F}_{\xi}^{k, g}$ only when $k$ belongs to $K$. This is because we are interested in ribbon operators  which commute with $B_{s_0}^K$. According to \eqref{eq:com-0-b-tilde}, this condition automatically enforces $k$ to be in $K$.
\end{remark}

\begin{figure}
\centering{
\includegraphics[width=4.8 in]{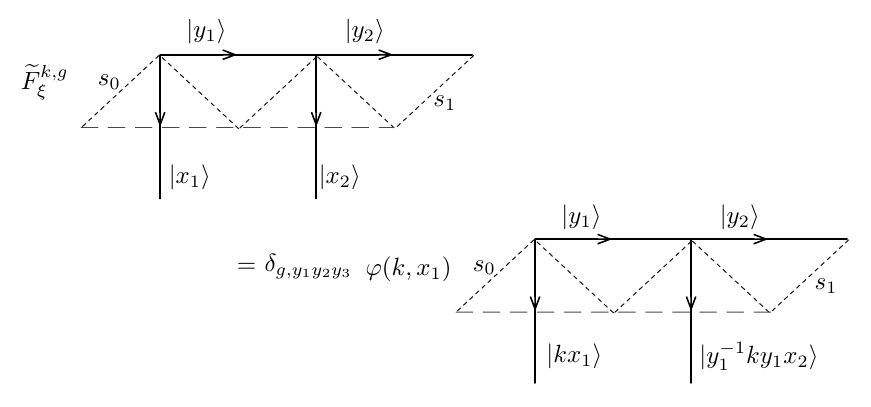}
}
\caption{Definition of $\widetilde{F}_{\xi}^{k, g}$ for a ribbon which connects a boundary site $s_0$ with a corresponding solid edge, to an internal site $s_1$. Here we assume that $k, x_1$ are in $K$.}
\label{fig:ribbon-tilde}
\end{figure}

Let $\widetilde{\mathcal{F}}_{\xi}$ be the algebra generated by $\widetilde{F}_{\xi}^{k, g}$, $k\in K$, $g\in G$, and 
\begin{align*}
\widetilde{\mathcal{C}}_{\xi} =\{T \in \widetilde{\mathcal{F}}_{\xi}:\, [T, \widetilde{A}_{s_0}^K ]=[ T, B_{s_0}^K]=0\}.
\end{align*} 
For every $k\in K$ and $g\in G$ define
\begin{align}\label{eq:def-t-tilde}
\widetilde{T}^{k, g} = \sum_{l\in K} \varphi(l, k) \varphi(lk, l^{-1}) \widetilde{F}_{\xi}^{lkl^{-1}, lg^{-1}}.
\end{align}

\begin{proposition}\label{prop:span-tilde}
The algebra $\widetilde{\mathcal{C}}_{\xi}$ is spanned by the operators $\widetilde{T}^{k,g}$, $k\in K$, $g\in G$.
\end{proposition}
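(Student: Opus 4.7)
The plan is to mimic the proof of Proposition \ref{prop:rel-gen-1}, tracking the extra cocycle phases. First observe that the condition $[T,B_{s_0}^K]=0$ is automatic for any $T\in\widetilde{\mathcal F}_\xi$: by \eqref{eq:com-0-b-tilde}, $B_{s_0}^l\widetilde F_\xi^{k,g}=\widetilde F_\xi^{k,g}B_{s_0}^{lk}$, and summing over $l\in K$---using that $l\mapsto lk$ is a bijection of $K$ since $k\in K$---yields $[B_{s_0}^K,\widetilde F_\xi^{k,g}]=0$. Hence the only nontrivial constraint defining $\widetilde{\mathcal C}_\xi$ is $[T,\widetilde A_{s_0}^K]=0$.

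To see that $\widetilde T^{k,g}\in\widetilde{\mathcal C}_\xi$, I would substitute \eqref{eq:com-0-a-tilde} into $\widetilde A_{s_0}^m\widetilde T^{k,g}$, re-index $l\mapsto m^{-1}l'$ in \eqref{eq:def-t-tilde}, and check that the resulting coefficient of $\widetilde F_\xi^{l'kl'^{-1},l'g^{-1}}\widetilde A_{s_0}^m$ simplifies to $\varphi(l',k)\varphi(l'k,l'^{-1})$, recovering $\widetilde T^{k,g}\widetilde A_{s_0}^m$. Conversely, writing $T=\sum_{k\in K,\,g\in G}c_{k,g}\widetilde F_\xi^{k,g}\in\widetilde{\mathcal C}_\xi$ and equating $\widetilde A_{s_0}^KT$ with $T\widetilde A_{s_0}^K$ via \eqref{eq:com-0-a-tilde}, matching coefficients of the (linearly independent) operators $\widetilde F_\xi^{h,x}\widetilde A_{s_0}^m$ yields the covariance relation
\begin{align*}
c_{mkm^{-1},\,mg}=\varphi(mk,m^{-1})\,\varphi(m,k)\,c_{k,g},\qquad m\in K.
\end{align*}
This determines $c$ on each orbit of the $K$-action $(k,g)\mapsto(mkm^{-1},mg)$ by its value at a single representative, with exactly the phase prefactor appearing in \eqref{eq:def-t-tilde}. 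Hence the orbit contribution of a chosen representative $(k_0,g_0^{-1})$ is $c_{k_0,g_0^{-1}}\,\widetilde T^{k_0,g_0}$, and summing over orbits writes $T$ as a linear combination of the $\widetilde T^{k,g}$.

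The main obstacle is the cocycle bookkeeping in the first step: one has to verify an identity of the shape
\begin{align*}
\varphi(m,lkl^{-1})\,\varphi(mlkl^{-1},m^{-1})\,\varphi(l,k)\,\varphi(lk,l^{-1})=\varphi(ml,k)\,\varphi(mlk,l^{-1}m^{-1}),
\end{align*}
which comes out of repeated applications of \eqref{eq:cocycle} together with the normalizations of Lemma \ref{lem:2-cocycle}. Conceptually this is just the associativity $L(m)L(l)L(k)L(l^{-1})L(m^{-1})=L(mlkl^{-1}m^{-1})$ in the lifting of the projective representation attached to $\varphi$, so no new ideas are required---only careful substitution. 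The other minor point is the linear independence of the operators $\widetilde F_\xi^{h,x}\widetilde A_{s_0}^m$, which follows because on a generic basis vector of the lattice Hilbert space distinct triples $(h,x,m)$ produce distinct basis vectors.
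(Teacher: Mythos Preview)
Your proposal is correct and follows essentially the same approach as the paper: show $\widetilde T^{k,g}\in\widetilde{\mathcal C}_\xi$ by reducing $\widetilde A_{s_0}^m\widetilde T^{k,g}=\widetilde T^{k,g}\widetilde A_{s_0}^m$ to a cocycle identity (yours is the paper's identity before the substitution $l\mapsto m^{-1}l$), and then argue spanning via the covariance relation on the coefficients $c_{k,g}$. Your observation that $[T,B_{s_0}^K]=0$ is automatic for $T\in\widetilde{\mathcal F}_\xi$ is sharper than the paper's phrasing, and your remark that the cocycle identity is nothing but associativity of $L(m)L(l)L(k)L(l)^{-1}L(m)^{-1}$ in the lifting is a helpful conceptual shortcut the paper does not make explicit.
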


\begin{proof}
$[B_{s_0}^K, \widetilde{T}^{k, g}]=0$ is easy. For $m\in K$ we have
\begin{align*}
\widetilde{A}_{s_0}^m \widetilde{T}^{k,g} & = \sum_{l\in K}  \varphi(l, k) \varphi(lk, l^{-1})\, \widetilde{A}_{s_0}^m \widetilde{F}_{\xi}^{lkl^{-1}, lg^{-1}} \\
&= \sum_{l\in K}  \varphi(l, k) \varphi(lk, l^{-1}) \varphi(m, lkl^{-1}) \varphi(mlkl^{-1}, m^{-1})\, \widetilde{F}_{\xi}^{mlkl^{-1}m^{-1}, mlg^{-1}} \widetilde{A}_{s_0}^m\\
&= \sum_{l\in K}  \varphi(m^{-1}l, k) \varphi(m^{-1}lk, l^{-1}m)  \varphi(m, m^{-1}l k l^{-1}m ) \varphi(l k l^{-1}m, m^{-1})\, \widetilde{F}_{\xi}^{lkl^{-1}, lg^{-1}} \widetilde{A}_{s_0}^m.
\end{align*}
So to prove $[\widetilde{A}_{s_0}^K, \widetilde{T}^{k, g}]=0$ it is sufficient to show that 
$$ \varphi(m^{-1}l, k) \varphi(m^{-1}lk, l^{-1}m)  \varphi(m, m^{-1}l k l^{-1}m ) \varphi(l k l^{-1}m, m^{-1})= \varphi(l, k) \varphi(lk, l^{-1}).$$
By three applications of \eqref{eq:cocycle} and equation 4 of Lemma \ref{lem:2-cocycle} we have
\begin{align*}
& \quad \,\, \varphi(m^{-1}l, k) \varphi(m^{-1}lk, l^{-1}m)  \varphi(m, m^{-1}l k l^{-1}m ) \varphi(l k l^{-1}m, m^{-1})\\
& = \varphi(m^{-1}l, k) \varphi(m, m^{-1}lk)  \varphi(lk, l^{-1} m ) \varphi(l k l^{-1}m, m^{-1})\\
& =  \varphi(m^{-1}l, k) \varphi(m, m^{-1}lk)  \varphi(l^{-1}m, m^{-1}) \varphi(l k , l^{-1})\\ 
& = \varphi(l, k) \varphi(m, m^{-1}l)  \varphi(l^{-1}m, m^{-1}) \varphi(l k , l^{-1})\\
& =  \varphi(l, k)  \varphi(l k , l^{-1}). 
\end{align*}
As a result, $\widetilde{T}^{k, g} \in \widetilde{\mathcal{C}}_{\xi}$. To see that these operators span $\widetilde{\mathcal{C}}_{\xi}$, observe that if $T = \sum c_{k, g} \widetilde{F}_{\xi}^{k, g}$ is in $\widetilde{\mathcal{C}}_{\xi}$, by $[T, B_{s_0}^K] =0 $, $c_{k, g} =0$ for every $k\notin K$. Moreover, by $[T, \widetilde{A}_{s_0}^K] =0$, $c_{mkm^{-1}, lg}$, for every $m\in K$, is uniquely determined in terms of $c_{k, g}$.

\end{proof}

\begin{proposition} \label{prop:rel-tilde}
The following equations hold.
\begin{enumerate}
\item $\widetilde{T}^{k, gm} = \varphi(m, k) \varphi(mk, m^{-1}) \, \widetilde{T}^{mkm^{-1}, g}$, for every $m\in K$.
\item $\widetilde{T}^{k,g}\widetilde{T}^{k',g'}=0$, if $gK\neq g'K$.
\item $\widetilde{T}^{k,g}\widetilde{T}^{k',g} = \varphi(k, k') \widetilde{T}^{kk', g}$.
\item $(\widetilde{T}^{k, g})^{\dagger} = \widetilde{T}^{k^{-1}, g}$.
\end{enumerate}
\end{proposition}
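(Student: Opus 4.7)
The plan is to verify each of the four identities by direct computation from the definition
$$\widetilde{T}^{k, g} = \sum_{l\in K} \varphi(l, k) \varphi(lk, l^{-1}) \widetilde{F}_{\xi}^{lkl^{-1}, lg^{-1}},$$
reducing each claim to a cocycle identity that follows from \eqref{eq:cocycle} together with Lemma~\ref{lem:2-cocycle}, in the same style as the $[\widetilde{A}_{s_0}^K, \widetilde{T}^{k,g}]=0$ calculation in the proof of Proposition~\ref{prop:span-tilde}. Throughout, the basic tool on the ribbon-operator side is the multiplication rule \eqref{eq:multi-f-tilde}, and all structural information must therefore be extracted from the scalar coefficients.

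For statement~1, I would write out $\widetilde{T}^{k,gm}$ using the definition and perform the change of summation variable $l \mapsto lm$ so that the ribbon label $\widetilde{F}_{\xi}^{lkl^{-1},\, lg^{-1}}$ aligns term-by-term with the corresponding expansion of $\widetilde{T}^{mkm^{-1}, g}$. Equality of coefficients then reduces to the four-factor identity
$$\varphi(l,k)\varphi(lk,l^{-1}) = \varphi(m,k)\varphi(mk,m^{-1})\varphi(lm^{-1},mkm^{-1})\varphi(lkm^{-1},ml^{-1}),$$
which I expect to prove by three or four applications of \eqref{eq:cocycle} combined with property~4 of Lemma~\ref{lem:2-cocycle}, entirely analogous to the chain of equalities at the end of the proof of Proposition~\ref{prop:span-tilde}.

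For statements~2 and~3, I would multiply out the double sum defining $\widetilde{T}^{k,g}\widetilde{T}^{k',g'}$ and apply \eqref{eq:multi-f-tilde}. The Kronecker delta forces $lg^{-1}=l'g'^{-1}$, i.e.\ $l'^{-1}l = g'g^{-1}$; since $l'^{-1}l\in K$, this is possible only when $gK=g'K$, which gives statement~2 immediately. When $g=g'$ the constraint becomes $l=l'$ and the surviving sum has ribbon label $\widetilde{F}_{\xi}^{lkk'l^{-1},\, lg^{-1}}$, matching the form of $\widetilde{T}^{kk',g}$. Statement~3 then reduces to the scalar identity
$$\varphi(l,k)\varphi(lk,l^{-1})\,\varphi(l,k')\varphi(lk',l^{-1})\,\varphi(lkl^{-1},lk'l^{-1}) \;=\; \varphi(k,k')\,\varphi(l,kk')\varphi(lkk',l^{-1}),$$
which I will prove by repeatedly using \eqref{eq:cocycle}, for instance first combining $\varphi(l,k)\varphi(lk,l^{-1})\varphi(lkl^{-1},lk'l^{-1})$ into an expression involving $\varphi(l,kk'l^{-1})$ and then re-expanding.

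Statement~4 is the easiest: taking the adjoint of the definition, applying $(\widetilde{F}_{\xi}^{k,g})^{\dagger}=\widetilde{F}_{\xi}^{k^{-1},g}$ and using $|\varphi|=1$, the claim reduces to
$$\overline{\varphi(l,k)\,\varphi(lk,l^{-1})} \;=\; \varphi(l,k^{-1})\,\varphi(lk^{-1},l^{-1}),$$
which follows from properties~3 and~4 of Lemma~\ref{lem:2-cocycle} together with one application of \eqref{eq:cocycle}. The main obstacle throughout is purely bookkeeping: choosing the right order in which to apply \eqref{eq:cocycle} so that the cancellations in the four-factor identity (statement~1) and the five-factor identity (statement~3) become visible; the rest of the argument is mechanical.
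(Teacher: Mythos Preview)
Your proposal is correct and follows essentially the same approach as the paper's proof: each statement is reduced, via the definition \eqref{eq:def-t-tilde} and the multiplication rule \eqref{eq:multi-f-tilde}, to a scalar cocycle identity that is then verified using \eqref{eq:cocycle} and Lemma~\ref{lem:2-cocycle}. The identities you single out for statements~1, 3, and~4 are exactly (up to the obvious relabeling of the summation variable in statement~1) the identities the paper isolates and proves; only the second coordinate of the ribbon label you quote in statement~1 is slightly off, but this is cosmetic and does not affect the argument.
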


\begin{proof}
\begin{align*}
\widetilde{T}^{k, gm} & = \sum_{l\in K} \varphi(l, k) \varphi(lk, l^{-1}) \,\widetilde{F}_{\xi}^{lkl^{-1}, lm^{-1}g^{-1}}\\ 
&= \sum_{l\in K} \varphi(lm, k) \varphi(lmk, m^{-1}l^{-1}) \, \widetilde{F}_{\xi}^{lmkm^{-1}l^{-1}, lg^{-1}}.
\end{align*}
So for the first equation it is sufficient to show 
$$\varphi(lm, k) \varphi(lmk, m^{-1}l^{-1})= \varphi(m, k) \varphi(mk, m^{-1}) \varphi(l, mkm^{-1}) \varphi(lmkm^{-1}, l^{-1}).$$
Using Lemma \ref{lem:2-cocycle} we have
\begin{align*}
\varphi(lm, k) \varphi(lmk, m^{-1}l^{-1}) & = \varphi(lm, k) \varphi(l, m) \varphi(m^{-1}, l^{-1}) \varphi(lmk, m^{-1}l^{-1}) \\
& = \varphi(l, mk) \varphi(m, k) \varphi(lmk, m^{-1}) \varphi(lmkm^{-1} , l^{-1}) \\
& = \varphi(m, k) \varphi(lmkm^{-1}, l^{-1}) \varphi(l, mkm^{-1}) \varphi(mk, m^{-1}).
\end{align*}

The second equation is an easy consequence of \eqref{eq:multi-f-tilde}. For the third equation we have
\begin{align*}
\widetilde{T}^{k,g}\widetilde{T}^{k',g} & = \sum_{l\in K} \varphi(l, k) \varphi(lk, l^{-1}) \widetilde{F}_{\xi}^{lkl^{-1}, lg^{-1}} \sum_{l'\in K} \varphi(l', k') \varphi(l'k'l'^{-1}) \widetilde{F}_{\xi}^{l'k'l'^{-1}, l'g^{-1}}\\
&= \sum_{l\in K}  \varphi(l, k) \varphi(lk, l^{-1})  \varphi(l, k') \varphi(lk', l^{-1}) \varphi(lkl^{-1}, lk'l^{-1}) \widetilde{F}_{\xi}^{lkk'l^{-1}, lg^{-1}}.
\end{align*}
So we need to show that 
$$ \varphi(l, k) \varphi(lk, l^{-1})  \varphi(l, k') \varphi(lk', l^{-1}) \varphi(lkl^{-1}, lk'l^{-1}) = \varphi(k, k') \varphi(l, kk') \varphi(lkk', l^{-1}),$$
which can be proved by
\begin{align*}
& \quad \,\, \varphi(l, k) \varphi(lk, l^{-1})  \varphi(l, k') \varphi(lk', l^{-1}) \varphi(lkl^{-1}, lk'l^{-1}) \\
& = \varphi(l, k) \varphi(l, k') \varphi(lk', l^{-1}) \varphi(lk, k'l^{-1}) \varphi(l^{-1}, lk'l^{-1})\\
&= \varphi(l, k') \varphi(lk', l^{-1}) \varphi(l^{-1}, lk'l^{-1}) \varphi(l, kk'l^{-1}) \varphi(k, k'l^{-1})\\
& = \varphi(l, k'l^{-1}) \varphi(k', l^{-1}) \varphi(l^{-1}, lk'l^{-1}) \varphi(l, kk'l^{-1}) \varphi(k, k'l^{-1})\\
& = \varphi(l, k'l^{-1}) \varphi(l^{-1}, lk'l^{-1}) \varphi(l, kk'l^{-1}) \varphi(kk', l^{-1}) \varphi(k, k')\\
& = \varphi(l^{-1}, l ) \varphi(e, k'l^{-1}) \varphi(l, kk'l^{-1}) \varphi(kk', l^{-1}) \varphi(k, k')\\
& =  \varphi(k, k') \varphi(l, kk') \varphi(lkk', l^{-1}).
\end{align*}

For the last equation we have
\begin{align*}
(\widetilde{T}^{k, g})^{\dagger} & = \sum_{l\in K} \varphi(l, k)^{\ast} \varphi(lk, l^{-1 })^{\ast} \left( \widetilde{F}_{\xi}^{lkl^{-1}, lg^{-1}} \right)^{\dagger}\\
& = \sum_{l\in K} \varphi(k^{-1}, l^{-1}) \varphi(l, k^{-1}l^{-1}) \widetilde{F}_{\xi}^{lk^{-1}l^{-1}, lg^{-1}}\\
& = \sum_{l\in K} \varphi(l, k^{-1}) \varphi(lk^{-1}, l^{-1}) \widetilde{F}_{\xi}^{lk^{-1}l^{-1}, lg^{-1}}\\
& = \widetilde{T}^{k^{-1}, g}.
\end{align*}
\end{proof}

This proposition gives a full characterization of the algebra $\widetilde{\mathcal{C}}_{\xi}$. The next step is to compute the induced representation of $D(G)$. Let $\vert \widetilde{\psi}_K\rangle$ be the ground state of $\widetilde{H}_{G, K}$. Define $\vert \widetilde{\psi}_K^{k, g} \rangle = \widetilde{T}^{k, g} \vert \widetilde{\psi}_K \rangle $, and let $\mathcal{A}(K, \varphi)$ be the span of these vectors. Using \eqref{eq:com-1-a-tilde} and \eqref{eq:com-1-b-tilde} the representation $\mathcal{A}(K, \varphi)$ of $D(G)$ is given by 
\begin{align*}
A_{s_1}^{h} \vert \widetilde{\psi}_K^{k, g} \rangle & = \vert \widetilde{\psi}_K^{k, hg} \rangle, \\
B_{s_1}^h \vert \widetilde{\psi}_K^{k, g} \rangle & = \delta_{h, gkg^{-1}} \vert \widetilde{\psi}_K^{k, g} \rangle.
\end{align*}
To compute the character of this representation we need the following lemma.

\begin{lem} \label{lem:inner-tilde} The following equations hold for every $k, k'\in K$ and $g, g'\in G$.
\begin{enumerate}
\item $\langle \widetilde{\psi}_K \vert  \widetilde{F}_{\xi}^{k, g} \vert \widetilde{\psi}_K \rangle = \frac{1}{\vert G\vert} \delta_{k, e}$.
\item $\langle \widetilde{\psi}_K^{k, g} \vert \widetilde{\psi}_K^{k', g'} \rangle = 0 $ if $gK \neq g'K$.
\item $\langle \widetilde{\psi}_K^{k, g} \vert \widetilde{\psi}_K^{k', g} \rangle =  \frac{\vert K\vert}{\vert G\vert} \delta_{k, k'} $.
\end{enumerate}
\end{lem}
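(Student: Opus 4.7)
The plan is to establish part (1) first by mimicking the proof of Lemma~\ref{lem:inner-product}, then to bootstrap parts (2) and (3) out of it using the multiplication and adjoint rules already proved in Proposition~\ref{prop:rel-tilde}. For part (1), observe that the commutation relations~\eqref{eq:com-1-a-tilde} and~\eqref{eq:com-1-b-tilde} at the \emph{internal} site $s_1$ are formally identical to the ones used in the proof of Lemma~\ref{lem:inner-product}. So I would write
\begin{align*}
\langle \widetilde{\psi}_K \vert \widetilde{F}_{\xi}^{k,g}\vert \widetilde{\psi}_K\rangle = \langle \widetilde{\psi}_K \vert B_{s_1}^{e} \widetilde{F}_{\xi}^{k,g}\vert \widetilde{\psi}_K\rangle = \langle \widetilde{\psi}_K \vert \widetilde{F}_{\xi}^{k,g} B_{s_1}^{g^{-1}k^{-1}g}\vert \widetilde{\psi}_K\rangle,
\end{align*}
which forces $k=e$ for a nonzero contribution. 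For $k=e$ the relation~\eqref{eq:com-1-a-tilde} then lets me average $g$ over its orbit under right multiplication by $k^{-1}$, showing that $\langle \widetilde{\psi}_K\vert \widetilde{F}_\xi^{e,g}\vert \widetilde{\psi}_K\rangle$ is independent of $g$. The value $1/\vert G\vert$ is then pinned down by the identity $\sum_{g}\widetilde{F}_\xi^{e,g}=I$, which in turn I would verify by comparing to the untilded ribbon operator: since the phase factors defining $\widetilde{F}_\xi^{k,g}$ involve $\varphi$ evaluated with $k$ (or $lkl^{-1}$) in one slot, setting $k=e$ and invoking $\varphi(e,\cdot)=\varphi(\cdot,e)=1$ from Lemma~\ref{lem:2-cocycle} collapses $\widetilde{F}_\xi^{e,g}$ to $F_\xi^{e,g}$, for which the sum is the identity by the standard Kitaev theory.

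Parts (2) and (3) I would reduce to part (1) algebraically. Expanding $\vert \widetilde{\psi}_K^{k,g}\rangle = \widetilde{T}^{k,g}\vert \widetilde{\psi}_K\rangle$ and using the adjoint rule $(\widetilde{T}^{k,g})^{\dagger} = \widetilde{T}^{k^{-1},g}$ from Proposition~\ref{prop:rel-tilde} gives
\begin{align*}
\langle \widetilde{\psi}_K^{k,g} \vert \widetilde{\psi}_K^{k',g'}\rangle = \langle \widetilde{\psi}_K \vert \widetilde{T}^{k^{-1},g}\,\widetilde{T}^{k',g'}\vert \widetilde{\psi}_K\rangle.
\end{align*}
Part (2) is then immediate from $\widetilde{T}^{k^{-1},g}\widetilde{T}^{k',g'}=0$ whenever $gK\neq g'K$ (Proposition~\ref{prop:rel-tilde}(2)). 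For part (3) I would apply Proposition~\ref{prop:rel-tilde}(3) to get $\varphi(k^{-1},k')\,\widetilde{T}^{k^{-1}k',g}$, expand the definition~\eqref{eq:def-t-tilde}, and then apply part (1) term by term. Only those $l\in K$ with $l k^{-1}k' l^{-1} = e$, i.e.\ $k=k'$, survive; in that case $k^{-1}k'=e$, and both $\varphi(l,e)$ and $\varphi(l,l^{-1})$ as well as $\varphi(k^{-1},k)$ all equal $1$ by parts (1), (2) and (4) of Lemma~\ref{lem:2-cocycle}, giving a clean sum of $\vert K\vert \cdot (1/\vert G\vert)$.

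The only real obstacle is part (1), specifically checking $\sum_{g}\widetilde{F}_\xi^{e,g}=I$, since everything at site $s_1$ proceeds verbatim as in Lemma~\ref{lem:inner-product}. Once the cocycle normalizations of Lemma~\ref{lem:2-cocycle} are used to trivialize all the phases when $k=e$, this reduces to the corresponding fact for the untilded ribbon operators, and the rest of the proof is a bookkeeping exercise in the cocycle identities.
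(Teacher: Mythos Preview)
Your proposal is correct and follows essentially the same approach as the paper, which simply states that the proof follows from the same steps as Lemma~\ref{lem:inner-product} and the computation~\eqref{eq:in-pro-2}. You have in fact supplied more detail than the paper does, in particular the observation that the cocycle normalizations of Lemma~\ref{lem:2-cocycle} trivialize the extra phases when $k=e$, reducing $\widetilde{F}_\xi^{e,g}$ to $F_\xi^{e,g}$.
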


\begin{proof} The proof follows from similar steps as in the proof of Lemma \ref{lem:inner-product} and \eqref{eq:in-pro-2}.

\end{proof}

The following theorem gives a generalization of~\eqref{eq:char-phi=1}.

\begin{thm}\label{thm:a(k,varphi)} \cite{davydov} The character of the representation $\mathcal{A}(K, \varphi) $ of $D(G)$ is given by
\begin{align}\label{eq:character}
\chi_{\mathcal{A}(K, \varphi)}(gh^{\ast}) = \frac{1}{\v K\v}\, \delta_{gh, hg} \sum_{x\in G} \delta_{xgx^{-1}\in K} \delta_{xhx^{-1}\in K} \, \varphi(xgx^{-1}\v xhx^{-1}),
\end{align}
where $\varphi(k\v l) = \varphi(k, l) \varphi(klk^{-1}, k)^{-1} $ which equals $\varphi(k, l)\varphi(kl, k^{-1})$ if $\varphi$ satisfies Lemma~\ref{lem:2-cocycle}.
\end{thm}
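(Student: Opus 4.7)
The plan is to imitate the derivation of equation \eqref{eq:char-phi=1}, evaluating the character in an explicit orthogonal basis of $\mathcal{A}(K,\varphi)$ and carefully tracking the cocycle phases that arise through Proposition \ref{prop:rel-tilde}. First, fix left coset representatives $g_1,\ldots,g_r$ for $G/K$ with $r=|G|/|K|$. By Proposition \ref{prop:span-tilde} together with Proposition \ref{prop:rel-tilde}(1), every $\widetilde{T}^{k,g}$ is a phase multiple of some $\widetilde{T}^{k',g_i}$, so the states $|\widetilde{\psi}_K^{k,g_i}\rangle$, $k\in K$, $1\le i\le r$, span $\mathcal{A}(K,\varphi)$; by Lemma \ref{lem:inner-tilde}(2,3) they are orthogonal with squared norm $|K|/|G|$, so $\{\sqrt{r}\,|\widetilde{\psi}_K^{k,g_i}\rangle\}$ is an orthonormal basis.

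Next, act on this basis. Identifying $gh^{\ast}=A_{s_1}^{g}B_{s_1}^{h}$ and using the displayed actions preceding Lemma \ref{lem:inner-tilde}, one obtains $gh^{\ast}|\widetilde{\psi}_K^{k,g_i}\rangle=\delta_{h,\,g_ikg_i^{-1}}\,|\widetilde{\psi}_K^{k,gg_i}\rangle$. Write $gg_i=g_{\epsilon(i)}m_i$ with $m_i\in K$; Proposition \ref{prop:rel-tilde}(1) then gives
$$|\widetilde{\psi}_K^{k,gg_i}\rangle=\varphi(m_i,k)\,\varphi(m_ik,m_i^{-1})\,|\widetilde{\psi}_K^{m_ikm_i^{-1},\,g_{\epsilon(i)}}\rangle.$$
Taking the inner product with $|\widetilde{\psi}_K^{k,g_i}\rangle$, orthogonality (Lemma \ref{lem:inner-tilde}) forces $\epsilon(i)=i$, equivalently $m_i=g_i^{-1}gg_i\in K$, and $k=m_ikm_i^{-1}$; together with the constraint $k=g_i^{-1}hg_i$ from the Kronecker delta the latter becomes $g_i^{-1}gh\,g_i=g_i^{-1}hg\,g_i$, i.e.\ $gh=hg$. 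Summing over $k\in K$ and $i$, and using $r\cdot|K|/|G|=1$, the character collapses to
$$\chi_{\mathcal{A}(K,\varphi)}(gh^{\ast})=\delta_{gh,hg}\sum_{i=1}^{r}\delta_{g_i^{-1}gg_i\in K}\,\delta_{g_i^{-1}hg_i\in K}\,\varphi(m_i,k_i)\,\varphi(m_ik_i,m_i^{-1}),$$
with $m_i=g_i^{-1}gg_i$ and $k_i=g_i^{-1}hg_i$.

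Finally, set $x=g_i^{-1}$, so $m_i=xgx^{-1}$ and $k_i=xhx^{-1}$; the surviving phase becomes $\varphi(xgx^{-1},xhx^{-1})\,\varphi(xgx^{-1}\!\cdot\!xhx^{-1},\,xg^{-1}x^{-1})$, which is exactly the form $\varphi(k,l)\varphi(kl,k^{-1})$ of $\varphi(xgx^{-1}|xhx^{-1})$ advertised in the theorem. To replace the sum over the $r$ representatives by $\frac{1}{|K|}\sum_{x\in G}$, one checks that the summand is invariant under $x\mapsto mx$ for every $m\in K$: the two indicator factors are manifestly invariant, and the invariance $\varphi(mam^{-1}|mbm^{-1})=\varphi(a|b)$ for commuting $a,b\in K$ follows from a short application of the cocycle relation \eqref{eq:cocycle} and the normalizations in Lemma \ref{lem:2-cocycle}. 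I expect this $K$-invariance of the phase, and the closely related verification that the factor $\varphi(m_i,k_i)\varphi(m_ik_i,m_i^{-1})$ coincides with $\varphi(k|l)$, to be the main bookkeeping obstacle; however it is of the same flavour as the cocycle manipulations already carried out in the proof of Proposition \ref{prop:rel-tilde} and should go through routinely.
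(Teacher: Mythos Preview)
Your proposal is correct and follows essentially the same route as the paper's proof: both fix left coset representatives $g_1,\dots,g_r$, use Proposition~\ref{prop:rel-tilde} and Lemma~\ref{lem:inner-tilde} to obtain the orthonormal basis $\{\sqrt{r}\,|\widetilde{\psi}_K^{k,g_i}\rangle\}$, compute the diagonal matrix elements of $gh^{\ast}$ by writing $gg_i=g_{\epsilon(i)}m_i$ and invoking Proposition~\ref{prop:rel-tilde}(1), and then identify the surviving phase $\varphi(m_i,k_i)\varphi(m_ik_i,m_i^{-1})$ with $\varphi(g_i^{-1}gg_i\,|\,g_i^{-1}hg_i)$ before passing from the coset sum to $\frac{1}{|K|}\sum_{x\in G}$. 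The only difference is that you are more explicit than the paper about the final averaging step, flagging the $K$-invariance $\varphi(mam^{-1}\,|\,mbm^{-1})=\varphi(a\,|\,b)$ for commuting $a,b\in K$; the paper simply writes the last equality without comment, but your assessment that it is a routine cocycle manipulation in the spirit of Proposition~\ref{prop:rel-tilde} is correct.
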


\begin{proof}
Let $r= \vert G\vert / \vert K\vert$ and assume that $G = g_1 K \cup \dots \cup g_r K$. Then by Proposition \ref{prop:rel-tilde} and Lemma \ref{lem:inner-tilde}, $\{\sqrt{r} \vert \widetilde{\psi}_K^{k, g_i}\rangle :\, k\in K, i=1, \dots, r\}$ is an orthonormal basis for $\mathcal{A}(K, \varphi)$. Fix $g, h\in G$ and let $1\leq \epsilon(i) \leq r$ and $k_i\in K$ such that $gg_i = g_{\epsilon(i)} k_i$. We have 
\begin{align*}
\chi_{\mathcal{A}(K, \varphi)} (gh^{\ast}) & = r \sum_{k\in K} \sum_{i=1}^r \langle \widetilde{\psi}_K^{k, g_i} \vert gh^{\ast} \vert \widetilde{\psi}_K^{k, g_i} \rangle \\
& = r \sum_{k\in K} \sum_{i=1}^r  \delta_{h, g_i kg_i^{-1}} \langle \widetilde{\psi}_K^{k, g_i} \vert \widetilde{\psi}_K^{k, gg_i} \rangle\\
& = r \sum_{k\in K} \sum_{i=1}^r  \delta_{h, g_i kg_i^{-1}} \langle \widetilde{\psi}_K^{k, g_i} \vert \widetilde{\psi}_K^{k, g_{\epsilon(i)k_i}} \rangle \\
& = r \sum_{k\in K} \sum_{i=1}^r  \delta_{h, g_i kg_i^{-1}} \varphi(k_i\vert k)  \langle \widetilde{\psi}_K^{k, g_i} \vert \widetilde{\psi}_K^{k_ikk_i^{-1}, g_{\epsilon(i)}} \rangle \\
& = r \sum_{k\in K} \sum_{i=1}^r  \delta_{h, g_i kg_i^{-1}} \varphi(k_i\vert k)  \delta_{i, \epsilon(i)} \delta_{k, k_ikk_i^{-1}} \frac{\vert K\vert}{\vert G\vert} \\
& = \sum_{i=1}^{r} \delta_{g_i^{-1} gg_i \in K} \delta_{g_i^{-1} h g_i \in K}  \delta_{g_i^{-1} hg g_i ,\, g_i^{-1} g h g_i} \varphi(g_i^{-1}gg_i\vert g_i^{-1}hg_i)  \\
& = \frac{1}{\vert K\vert} \delta_{gh, hg} \sum_{x\in G} \delta_{xgx^{-1} \in K} \delta_{xhx^{-1} \in K} \varphi(xgx^{-1}\vert xhx^{-1}). 
\end{align*}

\end{proof}

Now to compute the condensed anyons we just need to decompose the character of the representation $\mathcal{A}(K, \varphi)$ into irreducible ones. An excitation $(\overline{a}, \pi)$ gets condensed at the boundary if $\chi_{(\overline{a}, \pi)}$ appears in this decomposition.

\begin{remark}
Note that if $\varphi'$ is another $2$-cocycle equivalent to $\varphi$ then $\varphi'(k\vert l) = \varphi(k \vert l)$ for every $k,l$ where $kl=lk$. Therefore, Theorem~\ref{thm:a(k,varphi)} still holds even if $\varphi$ does not satisfy Lemma~\ref{lem:2-cocycle}.
\end{remark}
 
\begin{remark}
$\mathcal{A}(K, \varphi)$ gives a representation of $D(G)$, so it is an object of $\mathcal{Z}(G)$. On the other hand, by Proposition \ref{prop:rel-tilde}, an algebra structure is defined on $\mathcal{A}(K, \varphi)$. Davydov has shown that these algebras are indeed all \emph{maximal indecomposable separable commutative algebras} of $\mathcal{Z}(G)$~\cite{davydov}.
\end{remark}

\section{A domain wall between two phases}\label{sec:domain-wall}

We now consider two phases corresponding to two groups $G$ and $G'$ and a domain wall between them, and want to study tunneling of anyons from the $G$-phase to the $G'$-phase. More precisely, we consider a planar lattice divided into two parts by a defect line; we associate the right half-plane to the quantum double model corresponding to group $G$, and the left half-plane to group $G'$; the terms of the Hamiltonian on the domain wall then depend on a subgroup $U\subseteq G\times G'$ and a $2$-cocycle $\varphi\in H^2(U, \mathbb{C}^{\times})$.  To understand these terms we can simply use the folding idea, i.e., we fold the plane through the defect line. Then we have one half-plane with a boundary. In this case, there are two vectors living on each edge of the lattice: one corresponding to $G$ and the other to $G'$. In other words, the Hilbert space associated with each edge of the lattice is $\mathbb{C}G\otimes \mathbb{C}G' \simeq \mathbb{C}(G\times G')$. Moreover, the boundary is parametrized by $U$ and the whole Hamiltonian would be equal to $\widetilde{H}_{G\times G', U}$ given by~\eqref{eq:hamiltonian-g-k-tilde}. 

Again, the bulk excitations correspond to irreps of $D(G\times G')$ which are basically pairs of irreps of $D(G)$ and $D(G')$ and can be interpreted as two anyons, one belonging to the $G$-phase and the other to the $G'$-phase. More formally, the bulk excitations correspond to $\mathcal{Z}(G\times G')$ which is equivalent to  $\mathcal{Z}(G)\boxtimes \mathcal{Z}(G')$ \cite{davydov}.  

Here there is a technical point. After folding, we change the orientation of the right half-plane, and then the braiding operator $C_{Y, Y'}$ (see Section~\ref{sec:z(g)}) will change to $C_{Y', Y}^{-1}$. That is why the excitation $X\boxtimes Y$ of $\widetilde{H}_{G\times G', U}$ corresponds to the pair of anyons $(X, Y^{op})$ in the unfolded plane, where if $Y=(\overline{y}, \pi)$ then $Y^{op}=(\overline{y}, \pi^{\ast})$ (see~\cite{davydov} for the definition of the opposite category). Therefore, in the unfolded plane, anyons on the right hand side indeed live in the category $\mathcal{Z}(G)^{op}$. However, as we will show later in this section $\mathcal{Z}(G)^{op}$ and $ \mathcal{Z}(G)$ are equivalent categories.

Now assume that an anyon $X$ in the $G$-phase, \emph{without creating any excitation on the domain wall}, tunnels to the $G'$-phase and becomes $Y$. This process in the folded half-plane is equivalent to creating  $X\boxtimes Y^{op}$ and moving it to the boundary so that it disappears, i.e., condensation of $X\boxtimes Y^{op}$. Since we know a classification of condensations of $\widetilde{H}_{G\times G', U}$ we can characterize tunnelings as well.

Let us give an example to clarify our framework. 
Assume that $G'=G$ and let $U=\Delta(G)=\{(g, g):\, g\in G\}$ and $\varphi$ be the trivial cocycle ($\varphi\equiv 1$). To find anyons in this condensation we use Theorem \ref{thm:a(k,varphi)}.
For $g=(g_1, g_2)$ and $h=(h_1, h_2)$ in $G\times G$ we have
\begin{align}
\chi_{\mathcal{A}(\Delta(G), 1)}(gh^{\ast})& = \frac{1}{\v \Delta(G)\v}\, \delta_{gh,hg}\sum_{x, y\in G} \delta_{xg_1x^{-1},yg_2 y^{-1}}\, \delta_{xh_1x^{-1}, yh_2 y^{-1}}\\
& = \delta_{gh,hg}\, \delta_{g_1h_1^{\ast}\overset{G}\sim g_2h_2^{\ast}} \,\v Z(g_1) \cap Z(h_1) \v,
\end{align}
where by $g_1h_1^{\ast}\overset{G}\sim g_2h_2^{\ast}$ we mean that there exists $l\in G$ such that $l(g_1h_1^{\ast})l^{-1} =g_2h_2^{\ast}$.
Now define
\begin{align}
\Phi(gh^{\ast}) = \sum_{(\overline{x}, \rho)} \chi_{(\overline{x}, \rho)\boxtimes (\overline{x}, \rho^{\ast})}\, (gh^{\ast}),
\end{align}
where the summation runs over all irreducible representations $(\overline{x}, \rho)$ of $D(G)$. We have
\begin{align}
\Phi(gh^{\ast}) & = \sum_{(\overline{x}, \rho)}\, \chi_{(\overline{x}, \rho)}(g_1h_1^{\ast})\, \chi_{(\overline{x}, \rho^{\ast})} (g_2h_2^{\ast}) \label{eq:phi1} \\
& = \delta_{g_1h_1 , h_1g_1} \delta_{g_2h_2 , h_2g_2} \delta_{h_1\overset{G}\sim h_2} \sum_{(\overline{h_1} , \rho)}  \tr_{\rho}(g_1)\, \tr_{\rho^{\ast}}(k_{h_2}^{-1} g_2 k_{h_2})\\
& = \delta_{gh , hg}  \delta_{h_1\overset{G}\sim h_2}
\delta_{g_1\overset{Z(h_1)}\sim  (k_{h_2}^{-1} g_2 k_{h_2}) } \v Z_{Z(h_1)}( g_1 ) \v \\
& = \delta_{gh , hg}  \delta_{g_1h_1^{\ast}\overset{G}\sim g_2h_2^{\ast}} \v Z( g_1 ) \cap Z(h_1) \v \label{eq:phi4},
\end{align}
where in the third line we use the orthogonality relations in the character table of $Z(h_1)$. As a result, $\chi_{\mathcal{A}(\Delta(G), 1)} = \Phi$, or equivalently anyons of the form $X\boxtimes X^{op}$ get condensed.\footnote{This example indeed show that the map $X\mapsto X^{op}$ gives the equivalence between two categories $\mathcal{Z}(G)$ and $\mathcal{Z}(G)^{op}$.}

\subsection{A non-trivial auto-equivalence of $\mathcal{Z}(\group)$}\label{sec:3}

The tunneling process may give an equivalence between two phases $G$ and $G'$. Suppose that condensations corresponding to $\mathcal{A}(U, \varphi)$ ($U\subseteq G\times G'$), is described by the character $\chi_{\mathcal{A}(U, \varphi)} = \sum_i \chi_{(X_i \boxtimes Y_i)}$. Then $X_i$ after tunneling, without creating any excitation at the domain wall, is changed to $Y_i$. On the other hand, fusions and braidings are invariant under tunneling. Therefore, if $X_i$'s and $Y_i$'s are all simple objects of $\mathcal{Z}(G)$ and $\mathcal{Z}(G')$ respectively, then $X_i\mapsto Y_i^{op}$ gives an equivalency between anyons of the $G$-phase and $G'$-phase.\footnote{$X_i\mapsto Y_i$ gives the equivalence $\mathcal{Z}(G)\simeq \mathcal{Z}(G')^{op}$ which by combining with the equivalence $\mathcal{Z}(G')^{op}\simeq \mathcal{Z}(G')$ we find that $\mathcal{Z}(G)\simeq \mathcal{Z}(G')$ is given by $X_i \mapsto Y_i^{op}$.} Using this idea we show a non-trivial symmetry in $\mathcal{Z}(\group)$.

Let $\mathbf{F}_q$ be the finite field with $q$ elements and denote its additive and multiplicative groups by $\mathbf{F}_q^+$ and $\mathbf{F}_q^{\times}$ respectively. Then the semidirect product of these groups is defined as follows. We represent elements of $\group$ by $(a, \alpha)$ where $a\in \mathbf{F}_q^{+}$ and $\alpha\in \mathbf{F}_q^{\times}$, and define $(a, \alpha)(a', \alpha') = (a+ \alpha \times a', \alpha \times \alpha')$ which by abuse of notation is denoted by $(a+ \alpha a', \alpha\alpha')$. The identity element of this group is $e=(0, 1)$ and the inverse of $(a, \alpha)$ is equal to $(a, \alpha)^{-1}= (-\alpha^{-1}a, \alpha^{-1})$.

We will use the following properties of $\group$. The conjugacy class of $(a, \alpha)$ is $\overline{(a, \alpha)}=\{(b, \alpha):\, b\in \mathbf{F}_q^{+}\}$ if $\alpha\neq 1$, and $\overline{(1, 1)}=\{(b, 1):\, b\in \mathbf{F}_q^{+}, b\neq 0\}$.
$K= \{ (a, 1):\, a\in \mathbf{F}_q^{+}  \}$ is a normal subgroup of $\group$ isomorphic to $\mathbf{F}_q^{+}$, and $K=Z(1, 1)$. Also note that for every $(a, \alpha)$ where $\alpha\neq 1$, $\v Z(a, \alpha)\v =q-1$ and $K\cap Z(a, \alpha)=\{e\}$.

Consider a non-trivial irrep of $K$, and let $\pi$ be the corresponding induced representation on $\group$. Then $\tr_{\pi} (e) = q-1$, $\tr_{\pi} (1, 1) =-1$, and $\tr_{\pi} (a, \alpha)=0$ if $\alpha\neq 1$. Since $\sum_{(a, \alpha)} \v \tr_{\pi} (a, \alpha)\v^2 = \v \group\v$, $\pi$ is an irreducible representation of $\group$.

\begin{thm}\label{thm:main}
There exists an auto-equivalence of $\mathcal{Z}(\group)$ whose corresponding permutation on simple objects is of the form $PJ$ where $J$ sends every object to its charge conjugation ($J: X \mapsto X^{\vee} $) and $P$ is the transposition of the chargeon $C=(e, \pi)$ and fluxion $F=(\overline{(1, 1)}, \mathbf{1})$.
\end{thm}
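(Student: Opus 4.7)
The plan is to realize the permutation $PJ$ as tunneling across a suitable domain wall, invoking the framework of Section~\ref{sec:domain-wall} with $G'=G=\group$. Concretely, I would exhibit a subgroup $U\subseteq G\times G$ and a $2$-cocycle $\varphi\in H^2(U,\mathbb{C}^\times)$ such that the condensation algebra of Theorem~\ref{thm:a(k,varphi)} decomposes into simples as
$$\mathcal{A}(U,\varphi)\;\simeq\;\bigoplus_{X}\,X\boxtimes \bigl(PJ(X)\bigr)^{op},$$
where $X$ ranges over simple objects of $\mathcal{Z}(G)$. Once this identification is in hand, the equivalence paragraph at the start of Section~\ref{sec:3} immediately produces the desired auto-equivalence implementing $PJ$.

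The first step is choosing $(U,\varphi)$. The structural clue is that the chargeon $C=(e,\pi)$ is induced from a nontrivial additive character of $K=\mathbf{F}_q^+=Z(1,1)$, while the fluxion $F=(\overline{(1,1)},\mathbf{1})$ has magnetic charge exactly $K\setminus\{0\}$; swapping $C$ with $F$ under tunneling therefore requires a domain wall that implements a Pontryagin/Fourier duality on the abelian slice $K$. I would take $U$ to consist of pairs whose multiplicative components are inverses of one another, together with full freedom in the additive direction so that $U\cap(K\times K)=K\times K$, and pick $\varphi$ with restriction to $K\times K$ equal to the nondegenerate pairing $(a,b)\mapsto\psi(ab)$ coming from the multiplication of $\mathbf{F}_q$ against a fixed additive character $\psi$. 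By construction, the values $\varphi(k\,|\,l)$ entering the summand of \eqref{eq:character} reproduce exactly the additive character data appearing in the representation $\pi$.

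Next I would compute $\chi_{\mathcal{A}(U,\varphi)}(gh^\ast)$ directly from Theorem~\ref{thm:a(k,varphi)} by stratifying the sum over $x\in G\times G$ according to whether the multiplicative components of the conjugated pair $(g_1,g_2),(h_1,h_2)$ are $1$ or not, using the conjugacy/centralizer data of $\group$ recorded in the text. In parallel, I would expand
$$\Phi(gh^\ast)\;=\;\sum_X \chi_X(g_1h_1^\ast)\,\chi_{(PJ(X))^{op}}(g_2h_2^\ast)$$
via the character formula \eqref{eq:dg-irreps}, using that $Y^{op}$ is obtained from $Y$ by conjugating the electric part and that $PJ$ acts as charge conjugation on all simples except swapping $C\leftrightarrow F$ (note $C^\vee\simeq C$ because $\pi$ has real character, and $F^\vee\simeq F$ because $(1,1)^{-1}=(-1,1)\in\overline{(1,1)}$). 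The ``diagonal'' part of $\Phi$, namely the contribution from $X\neq C,F$, is handled by precisely the orthogonality computation carried out in the example \eqref{eq:phi1}--\eqref{eq:phi4} with $J$ in place of $op$.

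The main technical hurdle will be matching the ``off-diagonal'' terms $\chi_{C\boxtimes F^{op}}+\chi_{F\boxtimes C^{op}}$ against the discrepancy between $\chi_{\mathcal{A}(U,\varphi)}$ and the diagonal analog $\chi_{\mathcal{A}(\Delta(G),1)}$. This reduces to a single Gauss-sum identity on $\mathbf{F}_q$: the character of $\pi$ evaluated on an element of $K\setminus\{0\}$ (which enters through $C$) must equal the sum of $\varphi(k\,|\,l)$ over the centralizing $K$-pairs producing $F$. Verifying this comes down to standard orthogonality of additive and multiplicative characters of $\mathbf{F}_q$; once established, bookkeeping on conjugacy classes and centralizers in $\group$ finishes the match of the two character expansions, and the theorem follows by applying the tunneling/equivalence principle recalled at the start of Section~\ref{sec:3}.
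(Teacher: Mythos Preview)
Your proposal is correct and follows essentially the same route as the paper: the paper takes exactly the subgroup $U=\{((a_1,\alpha),(a_2,\alpha^{-1}))\}$ you describe, equips it with the $2$-cocycle $\varphi(g,h)=\omega^{\tr_p(\alpha a_2 b_1)}$ (whose restriction to $K\times K$ is your nondegenerate additive pairing), computes $\chi_{\mathcal{A}(U,\varphi)}$ from Theorem~\ref{thm:a(k,varphi)} by the case split on whether $\alpha,\beta$ equal $1$, and then matches it against $\Psi-\Gamma$ where $\Psi=\sum_{(\overline{x},\rho)}\chi_{(\overline{x},\rho)\boxtimes(\overline{x^{-1}},\rho)}$ and $\Gamma=\chi_{C\boxtimes C}+\chi_{F\boxtimes F}-\chi_{C\boxtimes F}-\chi_{F\boxtimes C}$, which is precisely your diagonal-plus-correction strategy. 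Two small points to tighten when you write it up: you will need to specify $\varphi$ on all of $U$ (not just on $K\times K$) to run the character computation, and the reference object for the ``discrepancy'' is $\Psi$ rather than $\chi_{\mathcal{A}(\Delta(G),1)}$---these differ by the insertion of $J$ that you already flagged earlier in your plan.
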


\begin{proof} Define 
$$U= \{ ((a_1, \alpha), (a_2, \alpha^{-1})):\, a_1,a_2\in \mathbf{F}_q^{+},\, \alpha\in \mathbf{F}_q^{\times} \}.$$ 
Let $p$ be the characteristic of $\mathbf{F}_q$ (so $q$ is a power of $p$), and let $\omega$ be a $p$-th root of unity ($\omega^p=1$). Additionally, assume that $\tr_p: \mathbf{F}_q \rightarrow \mathbf{F}_p$ is the trace function, i.e., $\tr_p(a)$ is equal to the trace of the $\mathbf{F}_p$-linear map $x\mapsto ax$. Now define $\varphi: U\times U\rightarrow \mathbb{C}^{\times}$ by
\begin{align}
\varphi( g,  h ) = \omega^{\tr_p( \alpha a_2b_1)},
\end{align}
where $g=((a_1, \alpha),( a_2, \alpha^{-1}))$ and $h=((b_1, \beta),( b_2, \beta^{-1}))$. $\varphi$ satisfies~\eqref{eq:cocycle}, and then $\varphi\in H^2(U, \mathbb{C}^{\times})$. 
In Appendix~\ref{app:a} it is shown that the character of the representation $\mathcal{A}(U, \varphi)$ is given by $\chi_{\mathcal{A}(U, \varphi)}(gh^{\ast}) =0$ if $g$ or $h$ is not in $U$, and 
\begin{align}
\chi_{\mathcal{A}(U, \varphi)}(gh^{\ast}) =
\begin{cases}
  \delta_{gh,hg}\, \delta_{g, h \in U}\, (q-1)  & \mbox{if } \alpha\neq 1 \mbox{ or } \beta\neq 1, \\
  \delta_{gh,hg}\, \delta_{g, h \in U}\, \left( \delta_{a_1 b_2 , a_2 b_1} (q-1) -  \delta_{a_1 b_2  \neq a_2 b_1}\right) & \mbox{if } \alpha = \beta =1,
\end{cases}
\end{align}
if $g=((a_1, \alpha),( a_2, \alpha^{-1}))$ and $h=((b_1, \beta),( b_2, \beta^{-1}))$ belong to $U$.
Furthermore, it is shown that $\chi_{\mathcal{A}(U, \varphi)} = \Psi - \Gamma$ where 
\begin{align}
\Psi(gh^{\ast}) = \sum_{(\overline{x}, \rho)} \chi_{(\overline{x}, \rho) \boxtimes (\overline{x^{-1}}, \rho)} (gh^{\ast}),
\end{align}
and $\Gamma = \chi_{C\boxtimes C} + \chi_{F\boxtimes F} - \chi_{C\boxtimes F} - \chi_{F\boxtimes C}$.
As a result, $\mathcal{A}(U, \varphi)$ gives an auto-equivalence of $\mathcal{Z}(\group)$ which transposes $C$ and $F$ and sends $(\overline{x}, \rho)\neq C,F$ to $(\overline{x^{-1}}, \rho)^{op} = (\overline{x}, \rho)^{\vee}$. 

\end{proof}

$q=2$, the simplest example of this theorem, gives the group $\mathbb{Z}_2$, and the corresponding auto-equivalence is described in Section~\ref{sec:intro}.

For $q=3$ the group $\group$ is isomorphic to $S_3$, and the chargeon and fluxion constructed in the proof, correspond to representations $C$ and $F$ described in Section~\ref{sec:s3}. Moreover, in $\mathcal{Z}(S_3)$ the charge conjugation of each particle is itself. Thus this auto-equivalence of $\mathcal{Z}(S_3)$ only transposes $C$ and $F$, which means that these two particles in $\mathcal{Z}(S_3)$ are indistinguishable.

In Appendix~\ref{app:b} we show that a group $G$ has a symmetry similar to that of $\group$ only if $G\simeq \near$ where $\mathbf{H}$ is a finite \emph{near-field}.

\subsection{When are $\mathcal{Z}(G)$ and $\mathcal{Z}(G')$ equivalent?}

The proof of Theorem~\ref{thm:main} is based on the fact that there exists $\mathcal{A}(U, \varphi)$ such that $\chi_{\mathcal{A}(U, \varphi)}  = \sum_{i} \chi_{X_i\boxtimes Y_i}$ gives a permutation between anyons of the two phases. So if we find the necessary and sufficient condition for the existence of such $U\subseteq G\times G'$ and $\varphi\in H^2(U, \mathbb{C}^{\times})$  we can answer the question of whether $\mathcal{Z}(G)$ and $\mathcal{Z}(G')$ are equivalent. This question was first answered by Naidu and Nikshych~\cite{naidu, lagrangian} based on the classification of Lagrangian subcategories of $\mathcal{Z}(G)$.  Here we state the necessary and sufficient conditions of Davydov which is more appropriate for us.

\begin{thm} \label{thm:automorphism} \cite{davydov} An equivalence between $\mathcal{Z}(G)$ and $\mathcal{Z}(G')$ corresponds to a subgroup $U\subseteq G\times G'$, and $\varphi \in H^{2}(U, \mathbb{C}^{\times})$ such that
\begin{enumerate}
\item the projections of $U$ onto the first and second components are equal to $G$ and $G'$ respectively, and
\item the restriction of $\varphi(\cdot \v \cdot) $ (defined in Theorem~\ref{thm:a(k,varphi)}) on $(U\cap (G\times \{e\}))\times (U\cap ( \{e\}\times G'))$ is non-degenerate.
\end{enumerate}
Moreover, if such a $U$ and $\varphi$ exist, the map of the corresponding equivalence on simple objects can be computed by decomposing $\chi_{\mathcal{A}(U, \varphi)}$ into irreducible characters of $D(G\times G')$; if ${X\boxtimes Y}$ appears in this decomposition, then the equivalence sends $X$ to $Y^{op}$.
\end{thm}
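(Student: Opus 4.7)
The plan is to identify $\mathcal{A}(U,\varphi)$ as a Lagrangian algebra in $\mathcal{Z}(G\times G')$ and characterize which choices of $(U,\varphi)$ give Lagrangian ones, since braided equivalences $\mathcal{Z}(G)\simeq \mathcal{Z}(G')$ correspond bijectively to Lagrangian algebras in $\mathcal{Z}(G)\boxtimes \mathcal{Z}(G')^{op}\simeq \mathcal{Z}(G\times G')$ (the latter isomorphism being the op-equivalence $Y\mapsto Y^{op}$ constructed earlier in Section~\ref{sec:domain-wall}). By Davydov's classification \cite{davydov} recalled in the remark preceding this theorem, every maximal indecomposable separable commutative algebra in $\mathcal{Z}(G\times G')$ has the form $\mathcal{A}(U,\varphi)$, so it remains to single out the Lagrangian ones and read off the equivalence from the decomposition of $\chi_{\mathcal{A}(U,\varphi)}$ into simples.

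As a first step I would check that $\mathcal{A}(U,\varphi)$ always has the right dimension and is connected: setting $g=h=e$ in \eqref{eq:character} gives $\dim \mathcal{A}(U,\varphi)=|G||G'|=\sqrt{\dim \mathcal{Z}(G\times G')}$, and a short calculation with \eqref{eq:inner-product-rep}, using $\varphi(e,\cdot)=\varphi(\cdot,e)=1$ from Lemma~\ref{lem:2-cocycle}, shows $\langle \chi_{\mathcal{A}(U,\varphi)},\chi_{\mathbf{0}\boxtimes \mathbf{0}}\rangle=1$. So the Lagrangian condition is equivalent to the decomposition $\chi_{\mathcal{A}(U,\varphi)}=\sum_i n_i\chi_{X_i\boxtimes Y_i}$ being multiplicity-free ($n_i\equiv 1$) with $\{X_i\}$ and $\{Y_i\}$ exhausting the simple objects of $\mathcal{Z}(G)$ and of $\mathcal{Z}(G')$ respectively.

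Next I would match conditions (1) and (2) to these two requirements. For (1): if the projection $\pi_G(U)$ of $U$ to $G$ equals $H\subsetneq G$, then formula \eqref{eq:character} shows $\chi_{\mathcal{A}(U,\varphi)}((g_1,g_2)(h_1,h_2)^*)$ vanishes unless $g_1,h_1$ are $G$-conjugate into $H$; by orthogonality of irreducible characters of $D(G)$ some simple $X\in\mathcal{Z}(G)$ has zero multiplicity, contradicting exhaustion. Hence $\pi_G(U)=G$ is necessary, and symmetrically $\pi_{G'}(U)=G'$. Assuming (1), I would compute $\langle \chi_{\mathcal{A}(U,\varphi)},\chi_{\mathcal{A}(U,\varphi)}\rangle$ directly from \eqref{eq:character} and \eqref{eq:inner-product-rep}: after collapsing the resulting quadruple sum over conjugating elements of $G\times G'$ (here surjectivity from (1) lets the ``$x$'' sums absorb normalizations), everything reduces to a bilinear sum over $(U\cap(G\times\{e\}))\times(U\cap(\{e\}\times G'))$ in the commutator ratios $\varphi(k,l)\overline{\varphi(l,k)}=\varphi(k|l)$. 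The identity $\sum_i n_i^2=\sum_i n_i$, i.e.\ multiplicity-freeness, then holds precisely when this bilinear pairing is non-degenerate, which is exactly condition (2).

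Once both conditions hold, the decomposition yields a bijection $X_i\leftrightarrow Y_i$ between simples of $\mathcal{Z}(G)$ and $\mathcal{Z}(G')$; that the equivalence is $X\mapsto Y^{op}$ (rather than $X\mapsto Y$) is pinned down by the sanity check with $U=\Delta(G)$ and $\varphi\equiv 1$ carried out just before Section~\ref{sec:3}, which recovers the identity auto-equivalence of $\mathcal{Z}(G)$. I expect the main obstacle to be Step~3: carefully tracking the $2$-cocycle values through the double sum, repeatedly invoking the identities of Lemma~\ref{lem:2-cocycle}, and correctly recognizing the final expression as the commutator-pairing degeneracy on $(U\cap(G\times\{e\}))\times(U\cap(\{e\}\times G'))$. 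This is where the proof stops being formal and becomes a genuine cocycle computation.
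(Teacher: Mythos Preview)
The paper does not prove this theorem at all: it is stated with the citation \cite{davydov} and no proof is given, and the text immediately following it only remarks that the $(U,\varphi)$ from Theorem~\ref{thm:main} satisfy the hypotheses. So there is no ``paper's own proof'' to compare against; the theorem is imported wholesale from Davydov.

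That said, your outline is essentially the argument one finds in \cite{davydov}: one shows $\mathcal{A}(U,\varphi)$ is always a connected commutative separable algebra of the correct dimension $|G\times G'|$, and then characterizes when it is Lagrangian (equivalently, when its decomposition into simples is multiplicity-free and exhausts both factors) via the surjectivity of the two projections and the non-degeneracy of the commutator pairing $\varphi(\cdot|\cdot)$ restricted to $(U\cap(G\times\{e\}))\times(U\cap(\{e\}\times G'))$. Your identification of Step~3 (the cocycle computation reducing $\langle\chi_{\mathcal{A}(U,\varphi)},\chi_{\mathcal{A}(U,\varphi)}\rangle$ to the degeneracy of this pairing) as the substantive step is accurate, and the $\Delta(G)$ sanity check is exactly how the paper pins down the $Y\mapsto Y^{op}$ twist. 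One point to be careful about: your argument for the necessity of condition~(1) shows that if $\pi_G(U)\subsetneq G$ then some simple $X$ of $\mathcal{Z}(G)$ fails to appear, but you should also verify the converse direction, that conditions (1) and (2) together suffice for the decomposition to define a bijection (not merely a multiplicity-free map); this requires a separate dimension-count or an appeal to the general Lagrangian-algebra machinery, and is not automatic from the inner-product computation alone.
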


Observe that the subgroup $U$ and $2$-cocycle $\varphi$ defined in the proof of Theorem~\ref{thm:main} satisfy the conditions of this theorem.

The framework of tunneling can be considered for any $U$ and $\varphi$ and not necessarily those given by the above theorem. However, in general we obtain an equivalence between certain subcategories of $\mathcal{Z}(G)$ and $\mathcal{Z}(G')$ and not necessarily the whole categories (see Theorem 2.5.1 of~\cite{davydov}).

\section{Conclusion}\label{sec:conclusion}

In this paper we defined the quantum double model with boundaries and found the corresponding condensations. Our work is based on the characterization of algebras in $\mathcal{Z}(G)$. However, the algebras that we constructed are the maximal ones classified in~\cite{davydov}. Indeed, indecomposable separable commutative algebras of $\mathcal{Z}(G)$ are indexed by $\mathcal{A}(M, K, \varphi, \varepsilon)$ where $K\subseteq M$ are subgroups of $G$ and $K$ is normal in M, $\varphi\in H^2(K, \mathbb{C}^{\times})$, and $\varepsilon$ is some extension of $\varphi$ to $M\times K$. $\mathcal{A}(M, K, \varphi, \varepsilon)$ is maximal if $M=K$ and in this case $\varepsilon$ is uniquely determined in terms of $\varphi$. It is an interesting question whether we can define a boundary for the quantum double model so that the corresponding condensation is given by $\mathcal{A}(M, K, \varphi, \varepsilon)$. Since the model of~\cite{bombin} is defined based on two subgroups of $G$, a combination of the ideas of the current paper and~\cite{bombin} may answer this question. 

The condensations that we characterized are indeed single-quasiparticle excitations, and we know that such excitations do not exist in the usual quantum double model. In the extended quantum double model of~\cite{bombin}, however, single-quasiparticles are possible on surfaces with non-trivial topology. So it is interesting to see what happens to boundaries on surfaces beyond plane and sphere.  

Characterization of confinements as well as edge excitations in these models is another important problem. Classification of edge excitations will clarify domain wall excitations as well.

Our general framework for studying boundaries allows us to examine the known facts about the toric code with boundary for the non-abelian quantum double models. See~\cite{anomaly, zohar} for some results in this direction.

In the second part of the paper, by applying the folding idea we considered the problem of tunneling of an excitation from one phase to another one, and then explained the necessary and sufficient conditions on two groups $G$ and $G'$ such that $\mathcal{Z}(G)\simeq \mathcal{Z}(G')$. Based on this approach, we found some non-trivial auto-equivalence of $\mathcal{Z}(\group)$. Finding other such symmetries and their applications are also of interest. For example Bombin in~\cite{ising}, using the symmetry in the case of $G=\mathbb{Z}_2$ have realized Ising anyons from an abelian model. \\

\noindent{\bf Acknowledgements.} This paper would have never had this shape without several helpful discussions with Alexei Kitaev, so we gratefully acknowledge him. We are also thankful to Miguel A. Martin-Delgado for introducing his work on condensations in the Kitaev model, and Liang Kong, Chris Heunen, Alexei Davydov, and John Preskill for many clarifications.

\appendix

\vspace{.3in}
\begin{center}
{\Large \bf Appendix}
\end{center}

\section{Proof of Theorem~\ref{thm:main}}\label{app:a}

To compute $\chi_{\mathcal{A}(U, \varphi)}$ we use Theorem~\ref{thm:a(k,varphi)}. Since $U$ is a normal subgroup we have
\begin{align*}
\chi_{\mathcal{A}(U, \varphi)}(gh^{\ast}) = \frac{1}{\v U\v} \delta_{gh,hg}\, \delta_{g, h \in U}\, \sum_{k} \varphi(kgk^{-1}, khk^{-1}) \varphi( khk^{-1}, kgk^{-1} )^{-1}.
\end{align*}
Letting $g=((a_1, \alpha),( a_2, \alpha^{-1}))$, $h=((b_1, \beta),( b_2, \beta^{-1}))$ and $k=((x_1, \theta), (x_2, \lambda))$, we have
\begin{align*}
kgk^{-1} & =  ( (x_1+\theta a_1 - \alpha x_1, \alpha), (x_2+\lambda a_2 - \alpha^{-1} x_2 , \alpha^{-1} )  ),\\
khk^{-1} & =  ( (x_1+\theta b_1 - \beta x_1, \beta), (x_2+\lambda b_2 - \beta^{-1} x_2 , \beta^{-1} )  ).
\end{align*}
and thus
\begin{align*}
\varphi(kgk^{-1}, khk^{-1}) & = \omega^{\tr_p( ( \alpha x_2+\alpha \lambda a_2 -x_2  )( x_1+ \theta b_1 - \beta x_1  )    )},\\
\varphi( khk^{-1}, kgk^{-1} ) & = \omega^{\tr_p(  ( \beta x_2+\beta \lambda b_2 -x_2  )( x_1+ \theta a_1 - \alpha x_1  )  )}.
\end{align*}
Now observe that $gh=hg$ is equivalent to $b_1(\alpha -1) =a_1(\beta -1)$ and $\alpha(1-\beta )a_2 = \beta(1-\alpha) b_2$. So if $g$ and $h$ commute, $\varphi(kgk^{-1}, khk^{-1}) \varphi( khk^{-1}, kgk^{-1} )^{-1}$ is independent of $x_1, x_2$, and we have
\begin{align*}
\chi_{\mathcal{A}(U, \varphi)}(gh^{\ast}) = \frac{1}{\v U\v} \delta_{gh,hg}\, \delta_{g, h \in U}\, \sum_{x_1,x_2, \theta, \lambda} \omega^{\tr_p( \theta\lambda ( \alpha a_2 b_1 - \beta b_2 a_1)  )}.
\end{align*}
Therefore,
\begin{align}\label{eq:27}
\chi_{\mathcal{A}(U, \varphi)}(gh^{\ast}) =
\begin{cases}
  \delta_{gh,hg}\, \delta_{g, h \in U}\, (q-1)  & \mbox{if } \alpha b_1 a_2  = \beta b_2 a_1, \\
  - \delta_{gh,hg}\, \delta_{g, h \in U} & \mbox{if } \alpha b_1 a_2  \neq \beta b_2 a_1 .
\end{cases}
\end{align}
Note that if either $\alpha$ or $\beta$ is not equal to $1$, then $g, h\in U$ and $gh=hg$  imply $\alpha b_1 a_2  = \beta b_2 a_1$. Thus~\eqref{eq:27} can be simplified to
\begin{align*}
\chi_{\mathcal{A}(U, \varphi)}(gh^{\ast}) =
\begin{cases}
  \delta_{gh,hg}\, \delta_{g, h \in U}\, (q-1)  & \mbox{if } \alpha\neq 1 \mbox{ or } \beta\neq 1, \\
  \delta_{gh,hg}\, \delta_{g, h \in U}\, \left( \delta_{a_1 b_2 , a_2 b_1} (q-1) -  \delta_{a_1 b_2  \neq a_2 b_1}\right) & \mbox{if } \alpha = \beta =1 .
\end{cases}
\end{align*}

We now need to decompose $\chi_{\mathcal{A}(U, \varphi)}$ into irreducible characters. Let
\begin{align*}
\Psi(gh^{\ast}) = \sum_{(\overline{x}, \rho)} \chi_{(\overline{x}, \rho) \boxtimes (\overline{x^{-1}}, \rho)} (gh^{\ast}).
\end{align*}
By the same steps as in the computation of $\Phi(gh^{\ast})$ in~\eqref{eq:phi1}-\eqref{eq:phi4} we find that
\begin{align*}
\Psi(gh^{\ast}) =  \delta_{gh,hg}\, \delta_{ g_1h_1^{\ast}\sim g_2^{-1}(h_2^{-1})^{\ast}     }\, \v Z(g_1 )\cap Z( h_1 ) \v,
\end{align*}
where $g=(g_1, g_2)=( (a_1, \alpha), (a_2, \alpha'))$ and $h=(h_1, h_2)=((b_1, \beta), (b_2, \beta'))$.
Observe that if $gh=hg$ and either $\alpha\neq 1$ or $\beta\neq 1$, then $g_1h_1^{\ast}\sim g_2^{-1}(h_2^{-1})^{\ast} $ is equivalent to $g,h\in U$. This fact can be verified simply by writing these conditions in terms of $a_1, a_2, \alpha,$ etc. Also in this case $g,h\in U$ and $gh=hg$ imply $\v Z(g_1 )\cap Z( h_1 ) \v = q-1$. Moreover, if $\alpha=\beta=1$, then $ g_1h_1^{\ast}\sim g_2^{-1}(h_2^{-1})^{\ast}$ is equivalent to $a_1 b_2 = a_2 b_1$, $g_1\sim g_2^{-1}$, and $h_1\sim h_2^{-1}$. Therefore,
\begin{align*}
\Psi(gh^{\ast})= \begin{cases}
  \delta_{gh, hg}\, \delta_{g, h \in U}\, (q-1)  & \mbox{if } \alpha \neq 1 \mbox{ or }\beta\neq 1, \\
  \delta_{gh, hg}\, \delta_{g_1\sim g_2^{-1}} \delta_{h_1\sim h_2^{-1}} \delta_{a_1b_2,a_2b_1}\, \v Z(g_1 )\cap Z( h_1 ) \v & \mbox{if } \alpha=\beta=1 .
\end{cases}
\end{align*}

Define
$\Gamma = \chi_{C\boxtimes C} + \chi_{F\boxtimes F} - \chi_{C\boxtimes F} - \chi_{F\boxtimes C}$. Then
\begin{align*}
\Gamma(gh^{\ast}) & = \left( \chi_C(g_1h_1^{\ast}) -\chi_F(g_1h_1^{\ast}) \right) \left( \chi_C(g_2h_2^{\ast}) - \chi_F(g_2h_2^{\ast}) \right)\\
& = \delta_{gh,hg} \left(  \delta_{h_1,e}\, \tr_{\pi}(g_1) - \delta_{h_1\in \overline{(1, 1)}}  \right)\left(  \delta_{h_2,e}\, \tr_{\pi}(g_2) - \delta_{h_2\in \overline{(1, 1)}}    \right).
\end{align*}

If either $\alpha\neq 1$ or $\beta\neq 1$, then $\Gamma(gh^{\ast})=0$ and we have $\chi_{\mathcal{A}(U, \varphi)}(gh^{\ast})=\Psi(gh^{\ast})=\Psi(gh^{\ast})-\Gamma(gh^{\ast})$. Moreover, when $\alpha=\beta=1$ by considering a few cases one can verify that $\chi_{\mathcal{A}(U, \varphi)}(gh^{\ast}) = \Psi(gh^{\ast}) -\Gamma(gh^{\ast})$. For instance, if ($\alpha=\beta=1$ and) $a_1=0$ and $a_2\neq 0$ we have
\begin{align*}
\chi_{\mathcal{A}(U, \varphi)}(gh^{\ast}) & =\delta_{g_2h_2,h_2g_2} \delta_{\alpha'=\beta'=1} (\delta_{b_1=0}(q-1) - \delta_{b_1\neq 0})\\
& = \delta_{\alpha'=\beta'=1} (\delta_{b_1=0} (q-1) - \delta_{b_1\neq 0}),
\end{align*}
and
\begin{align*}
\Gamma(gh^{\ast}) &= \delta_{g_2h_2,h_2g_2}\delta_{\alpha'=\beta'=1} (\delta_{b_1=0}(q-1) - \delta_{b_1\neq 0})( -\delta_{b_2=0} - \delta_{b_2\neq 0})\\
&= \delta_{\alpha'=\beta'=1} (\delta_{b_1=0}(q-1) - \delta_{b_1\neq 0})(-1),
\end{align*}
and since $g_1=e$ is not conjugate with $g_2^{-1}\neq e$, $\Psi(gh^{\ast})=0$. Thus $\chi_{\mathcal{A}(U, \varphi)}(gh^{\ast})=\Psi(gh^{\ast})=\Psi(gh^{\ast})-\Gamma(gh^{\ast})$.

As a result, $\mathcal{A}(U, \varphi)$ corresponds to an auto-equivalence of $\mathcal{Z}(\group)$ which transposes $C$ and $F$ and sends $(\overline{x}, \rho)\neq C,F$ to $(\overline{x^{-1}}, \rho)^{op} = (\overline{x}, \rho)^{\vee}$.

\section{Chargeon-fluxion symmetry as a modular invariant}\label{app:b}

The corresponding $S$-matrix to $\mathcal{Z}(G)$ is defined in~\eqref{eq:s-matrix}. The $T$-matrix is a diagonal one that contains the \emph{twist numbers} of simple objects on the diagonal. For $\mathcal{Z}(G)$, $T$ is given by
\begin{align}\label{eq:t-matrix}
T_{(\overline{g}, \pi)(\overline{g}, \pi)} = T_{(\overline{g}, \pi)} = \frac{\tr_{\pi} (g)}{\tr_{\pi}(e)}.
\end{align}
The pair of matrices $(S, T)$ is called a \emph{modular data}, and a \emph{modular invariant} corresponding to $(S, T)$ is a matrix $M$ that commutes with both $S$ and $T$, and such that all entries of $M$ are non-negative integers and $M_{\mathbf{0}\mathbf{0}}=1$ ($\mathbf{0}$ is the trivial object).
Clearly, the permutation corresponding to an auto-equivalent of a modular tensor category commutes with both $S$ and $T$ and is a modular invariant. However, a modular invariant may not even be a permutation and then may not come from an auto-equivalence.

In this section we study permutation matrices which form a modular invariant of $\mathcal{Z}(G)$. In particular, we classify all groups $G$ for which there exists a modular invariant of the form $P$ or $PJ$, where $P$ is a transposition of a chargeon-fluxion pair. Note that $J$ always commutes with both $S$ and $T$ (it can easily be seen from the formulas~\eqref{eq:s-matrix} and~\eqref{eq:t-matrix} in the case of $\mathcal{Z}(G)$; for a proof in the general case see \cite{bakalov}). Thus, $PJ$ is a modular invariant if and only if $P$ is a modular invariant.

\subsection{Near-fields}\label{sec:4.1}

By the result of Section~\ref{sec:3}, all groups $\group$, defined in terms of a finite field, admit a transposition of a chargeon-fluxion pair as a modular invariant. Here we show that every group with a modular invariant of this form is isomorphic to $\near$ where $\mathbf{H}$ is a \emph{near-field}.

\begin{definition}
A set $\mathbf{H}$ with two binary operations $+$ and $\times$ is called a near-field if
\begin{enumerate}
\item $(\mathbf{H}, +)$ is an abelian group with the identity element $0$.

\item $0\times x = x\times 0 =0$ for every $x\in \mathbf{H}$.

\item $(\mathbf{H}\setminus 0, \times)$ is a group with the identity element $1$.

\item the multiplication is distributive from left with respect to the addition: $x\times (y + z) = x\times y + x\times z$. (Distributivity from right is not assumed.)
\end{enumerate}
\end{definition}

The class of all finite near-fields is completely known: there is a method for constructing finite near-fields due to Dickson \cite{dickson}, and it has been shown by Zassenhaus \cite{Zassenhaus} that all finite near-fields except precisely seven of them, are given by Dickson's construction.

For a near-field $\mathbf{H}$ one can consider an action of $\mathbf{H}^{\times}$ on $\mathbf{H}^+$ and define a group structure on $\near$ as follows. Elements of $\near$ are denoted by $(a, \alpha)$ where $a\in \mathbf{H}^{+}$ and $\alpha\in \mathbf{H}^{\times}$, and $(a, \alpha)(b, \beta) = (a+\alpha\times b , \alpha\times \beta)$. This multiplication turns $\near$ to a group with the identity element $e=(0, 1)$. (Note that to obtain a group we must define the action of $\mathbf{H}^{\times}$ on $\mathbf{H}^{+}$ by multiplication from \emph{left}, and multiplication from right does not work.) $K=\{(a, 1):  a\in \mathbf{H}^+ \}$ is a subgroup of $\near$ isomorphic to $\mathbf{H}^{+}$. On the other hand, it is easy to see that all elements of $K\setminus e$ are conjugate. Thus, $K$ is an abelian group all of whose elements, except identity, have the same order. As a result, the size of this group $\v K\v= \v \mathbf{H}\v =q$ is a power of a prime number, and $K\simeq \mathbf{H}^+ \simeq \mathbf{F}_q^{+}$.

We will also use the fact that the centralizer of the multiplicative group of every near-field (with more than $2$ elements) is non-trivial. This property can be verified by checking Dickson's near-fields as well as the other seven near-fields classified by Zassenhaus (see \cite{hall}).

\subsection{A group with a chargeon-fluxion symmetry is isomorphic to $\near$}

We now state the main result of this section.

\begin{thm} \label{thm:uniqueness}
Suppose that the permutation matrix $P$ corresponding to a transposition of a chargeon-fluxion pair forms a modular invariant for $\mathcal{Z}(G)$. Then $G\simeq \near$ where $\mathbf{H}$ is a near-field. Conversely, for every group $\near$ there exists such a modular invariant.
\end{thm}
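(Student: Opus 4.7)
The plan is to treat the two directions separately.

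For the converse, given $G=\near$, the construction in the proof of Theorem~\ref{thm:main} goes through verbatim: one sets
\[
U=\{((a_1,\alpha),(a_2,\alpha^{-1})):a_i\in\mathbf{H}^+,\,\alpha\in\mathbf{H}^\times\}\subseteq G\times G,
\]
and defines $\varphi(g,h)=\omega(\alpha\times a_2\times b_1)$ using a nontrivial additive character $\omega$ on the elementary abelian group $\mathbf{H}^+$. The character computation of Appendix~\ref{app:a} depends only on left-distributivity of $\times$ over $+$, the regularity of the $\mathbf{H}^\times$-action on $\mathbf{H}^+\setminus\{0\}$, and the attendant identities $\v Z(a,\alpha)\v=q-1$ and $Z(a,\alpha)\cap K=\{e\}$ for $\alpha\neq 1$, all of which hold in any finite near-field. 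Theorem~\ref{thm:automorphism} then supplies the desired auto-equivalence, whose underlying permutation on simples transposes $C=(e,\pi)$ and $F=(\overline{(1,1)},\mathbf{1})$.

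For the forward direction, write $C=(e,\pi)$ and $F=(\bar g,\mathbf{1})$. The $T$-constraint is automatic since $T_C=T_F=1$; the $S$-constraint $PS=SP$ amounts to $S_{C,X}=S_{F,X}$ for $X\notin\{C,F\}$ together with $S_{C,C}=S_{F,F}$. Evaluating~\eqref{eq:s-matrix} gives in succession $\dim\pi=\v\bar g\v$ (from $X=\mathbf{0}$), $\bar g\subseteq Z(g)$ (from $S_{C,C}=S_{F,F}$), and $g\in\ker\rho$ for every irrep $\rho\neq\pi$ of $G$ (from $X=(e,\rho)$). Hence $g$ lies in the normal subgroup $N_\pi:=\bigcap_{\rho\neq\pi}\ker\rho$; because $N_\pi\not\subseteq\ker\pi$ (otherwise $g=e$), the irreps of $G/N_\pi$ are exactly the $\rho\neq\pi$, which yields $\v N_\pi\v(\v G\v-\v\bar g\v^{2})=\v G\v$ and, by orthogonality of characters, $\tr_\pi(a)=\v\bar g\v-\v Z(g)\v$ for every $a\in N_\pi\setminus\{e\}$.

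The decisive step is to apply $S_{C,X}=S_{F,X}$ to $X=(\bar a,\mathbf{1})$ for $a\in N_\pi\setminus(\{e\}\cup\bar g)$: substituting the character value produces
\[
\v G\v-\v Z(g)\v^{2}=\v Z(a)\v\cdot\v\bar a\cap Z(g)\v\geq 0,
\]
but $\bar g\cup\{e\}\subseteq Z(g)$ forces $\v Z(g)\v\geq\v\bar g\v+1$ and hence $\v Z(g)\v^{2}>\v G\v$, a contradiction. Therefore no such $a$ exists, so $N_\pi=\bar g\cup\{e\}$, and the size formula collapses to $\v G\v=\v\bar g\v(\v\bar g\v+1)$ with $K:=N_\pi=Z(g)$. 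All elements of $K\setminus\{e\}$ are $G$-conjugate, so they share a common prime order $p$; since $Z(K)$ is a nontrivial characteristic subgroup respected by the transitive $G$-action on $K\setminus\{e\}$, $Z(K)=K$ and $K$ is elementary abelian. The complement $H=G/K$ then acts freely and transitively on $K\setminus\{e\}$, so $G=K\rtimes H$ is a sharply $2$-transitive permutation group, and by the classical theorem of Zassenhaus recalled in Section~\ref{sec:4.1}, such groups are precisely $\near$ for a finite near-field $\mathbf{H}$. I expect the main obstacle to be the chain of $S$-matrix identities that collapses $N_\pi$ onto $\bar g\cup\{e\}$ and identifies it with $Z(g)$; once that is done, the identification of $G$ with a near-field construction is classical, and the converse is essentially bookkeeping.
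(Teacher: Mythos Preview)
Your forward direction is sound and organizes the argument slightly differently from the paper. The paper proceeds in steps (a)--(l): it first uses the fusion rules to show that $\{e\}\cup\bar g$ is a subgroup, then establishes $Z(g)=\{e\}\cup\bar g$, and finally \emph{constructs} the near-field structure on $Z(g)$ by hand and produces the isomorphism $G\simeq\near$ explicitly. You instead package the character information into the normal subgroup $N_\pi=\bigcap_{\rho\neq\pi}\ker\rho$, use the $S$-identity for $X=(\bar a,\mathbf 1)$ to force $N_\pi=\{e\}\cup\bar g=Z(g)$, and then invoke Zassenhaus's classification of sharply $2$-transitive groups rather than building the near-field. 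Both routes work; yours is shorter but outsources the endgame to a theorem not actually stated in Section~\ref{sec:4.1} (what the paper cites there is Zassenhaus's classification of \emph{near-fields}, not of sharply $2$-transitive groups), and you should also say why $K$ has a complement in $G$ (Schur--Zassenhaus, since $\v K\v=\v\bar g\v+1$ and $\v G/K\v=\v\bar g\v$ are coprime).

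Your converse, however, has a genuine gap. The construction of Theorem~\ref{thm:main} does \emph{not} carry over to an arbitrary near-field $\mathbf H$: the set
\[
U=\{((a_1,\alpha),(a_2,\alpha^{-1})):a_i\in\mathbf H^+,\ \alpha\in\mathbf H^\times\}
\]
is a subgroup of $G\times G$ only when $\mathbf H^\times$ is abelian, because closure requires $(\alpha\beta)^{-1}=\alpha^{-1}\beta^{-1}$. For any non-field near-field (already the one of order $9$, with $\mathbf H^\times\simeq Q_8$) this fails, so neither Theorem~\ref{thm:a(k,varphi)} nor Theorem~\ref{thm:automorphism} applies, and the Appendix~\ref{app:a} computation cannot even be set up. The paper avoids this entirely: for the converse it does \emph{not} build an auto-equivalence but simply checks directly that the transposition $P=(C,F)$ commutes with $S$ and $T$, using only $\dim\pi=q-1$, $\tr_\pi(a)=-1$, $\tr_\pi(h)=0$ for $h\notin K$, and the description of $Z(a)=K$. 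That is what you should do here; it is both easier and actually correct for all near-fields.
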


\begin{proof}
We first show that there exists a chargeon-fluxion pair in $\mathcal{Z}(\near)$ that forms a modular invariant.

Consider a non-trivial representation of the abelian subgroup $K\subseteq \near$ (defined above), and denote its induced representation on $\near$ by $\pi$. Let $C=(e, \pi)$ and $F=(\overline{a}, \mathbf{1})$, where $a=(1,1)\in \near$. We claim that the permutation $P$ which exchanges $C$ and $F$ is a modular invariant.

By the definition of $\pi$, $\dim \pi=q-1$, $\tr_{\pi}(a)=-1$ and $\tr_{\pi}(h) = 0$ for every $h\notin K$. Then since $\sum_{g} \v\tr_{\pi}(g)\v^2 =q(q-1)$, $\pi$ is an irreducible representation. Also a dimension-counting argument shows that all other irreducible representations of $\near$ come from an irreducible representation of $\mathbf{H}^{\times} \simeq (\near)/K$, and then for every such representation $\mu$, $\tr_{\mu}(a)=\tr_{\mu}(e)=\dim \mu$.

$P$ commutes with $T$ because $T_{C}=T_{F}=1$. To prove $PS=SP$ we should show that $S_{CX}=S_{FX}$ for every irreducible representation $X\neq C, F$ of $D(\near)$, and $S_{CC}=S_{FF}$. This is a straightforward computation given the structure of $Z(a)=K$ and the irreducible representations of $\near$ described above.

Now consider an arbitrary group $G$, let $C=(e, \pi)$ be a chargeon and $F=(\overline{a}, \mathbf{1})$ a fluxion in $\mathcal{Z}(G)$, and assume that the permutation $P$ which interchanges $C$ and $F$ commutes with the corresponding $S$-matrix. Note that since by the Verlinde formula the fusion rules are computed in terms of $S$ and $PSP^{-1}=S$, the fusion rules are also symmetric with respect to $C$ and $F$. We prove $G\simeq \near$ in the following steps.\\

\noindent(a) $\pi\neq 1$ and $a \neq e$.\\

$\mathbf{0}=(e, 1)$ is the unique representation such that $\mathbf{0}\otimes X \simeq X$, so its fusion rules cannot be the same
as any other representation. Therefore, $C$ and $F$ are different from $\mathbf{0}$. $\Box$\\

\noindent(b) $\dim \pi = \v \overline{a}\v$.\\

Since $\mathbf{0}\neq C, F$ we have $S_{C\mathbf{0}}=S_{F\mathbf{0}}$. Then
\begin{align} \label{eq:dim}
\frac{\dim \pi}{\vert G\vert}=\frac{1}{\vert Z(a)\vert},
\end{align}
or equivalently $\dim \pi = \v \overline{a}\v$. $\Box$ \\

\noindent(c) $\{e\}\cup\overline{a}$ is a subgroup of $G$.\\

Let $X=(\overline{h},\mu)$ be a representation such that $\overline{h}$ is different from $\{e\}$ and $\overline{a}$. Since $C$ has a trivial magnetic flux, $C\otimes X$ is equivalent to the sum of representations whose magnetic flux is equal to $\overline{h}$. Thus, the magnetic flux of any representation in $F\otimes X$ should also be $\overline{h}$. This means that $\overline{a}\,\overline{h}=\overline{h}$ for any $h\notin \{e\}\cup\overline{a}$. As a result,  $\overline{a}\, \overline{a}\subseteq \{e\}\cup\overline{a}$ which implies that $\{e\}\cup\overline{a}$ is closed under multiplication and forms a subgroup. $\Box$\\

For every $X=(\overline{h}, \mu)$ we have
\begin{align*}
S_{CX}=\frac{1}{\v G\v\cdot \v Z(h)\v}\sum_{k\in G} \tr_{\pi}(kh^{-1}k^{-1})\tr_{\mu}(e)=
\frac{\tr_{\pi}(h^{-1})\,\dim \mu\ }{\v Z(h)\v},
\end{align*}
and
\begin{align*}
S_{FX}= \frac{1}{\v Z(h)\v \cdot \v Z(a)\v} \sum_{khk^{-1}\in Z(a)}
\tr_{\mu}(k^{-1}a^{-1}k).
\end{align*}
Therefore, if $X\neq C, F$
\begin{align}\label{eq:tr-pi-h}
\frac{\tr_{\pi}(h^{-1})\,\dim \mu\ }{\v Z(h)\v} = \frac{1}{\v Z(h)\v \cdot \v Z(a)\v} \sum_{khk^{-1}\in Z(a)}
\tr_{\mu}(k^{-1}a^{-1}k).
\end{align}

\noindent(d) For any irreducible representation $\mu$ of $G$ different from $\pi$ we have
\begin{align*}
\tr_{\mu}(a^{-1})=\tr_{\mu}(a)=\dim \mu=\tr_{\mu}(e).
\end{align*}

Let $h=e$ in~\eqref{eq:tr-pi-h} and note that $k^{-1}a^{-1}k$ is a conjugate of $a^{-1}$ in $Z(h)=G$. Thus $\tr_{\mu}(k^{-1}a^{-1}k)=\tr_{\mu} (a^{-1})$ and
\begin{align*}
\frac{\dim \pi\,\dim \mu\ }{\v G\v} = \frac{\tr_{\mu}(a^{-1})}{\v Z(a)\v}.
\end{align*}
Then by~\eqref{eq:dim} we obtain $\tr_{\mu}(a^{-1})=\dim \mu$. $\Box$\\

\noindent(e) $\tr_{\pi}(h)=0$, for any $h\notin \{e\}\cup \overline{a}$, and $\tr_{\pi}(a)=\tr_{\pi}(a^{-1})=-1$.\\

The column $\overline{h}$ of the character table of $G$ is orthogonal to columns $e$ and $\overline{a}$. On the other hand, by (d) columns
$e$ and $\overline{a}$ coincide except at the representation $\pi$. Therefore, $\tr_{\pi}(h)=0$. $\tr_{\pi}(a)=-1$ can be shown using (b), and the orthogonality of $\pi$ and the trivial representation of $G$. $\Box$\\

\noindent(f) $\v Z(a)\v =\v \overline{a}\v +1$.\\

Because of the orthogonality of columns $e$ and $\overline{a}$ of the character table of $G$ we have
\begin{align*}
\sum_{\mu} \tr_{\mu}(e) \tr_{\mu}(a)^{\ast}=0,
\end{align*}
where the sum is over all irreducible representations of $G$. Thus $\sum_{\mu\neq \pi} (\dim \mu)^2 - \dim \pi =0$. On the other hand, we know that
$\sum_{\mu} (\dim \mu)^2 = \v G\v$. Therefore, $\v G\v - (\dim \pi)^2 - \dim \pi =0$ which by using $\dim \pi =\v a\v $ gives $\v Z(a)\v =\v \overline{a}\v +1$. $\Box$\\

\noindent(g)  $Z(a) = \{e\} \cup \overline{a}$.\\

According to (f) it is sufficient to show that $h\notin Z(a)$ for every $h\notin \{e\}\cup \overline{a}$.
This fact can easily be seen from~\eqref{eq:tr-pi-h} by letting $\mu=\mathbf{1}$. $\Box$\\

\noindent(h) $Z(a)\simeq \mathbf{F}_q^{+}$ where $q$ is a power of a prime number, and $\v G\v = \v Z(a)\v\cdot \v \overline{a}\v = q(q-1)$.\\

Since $Z(a)=\{e\} \cup \overline{a}$ is a normal subgroup, $Z(b)=Z(a)$ for every $b\in \overline{a}$. Thus $Z(a)$ is an abelian subgroup. On the other hand, the order of all elements of $\overline{a}=Z(a)\setminus e$ is the same. Therefore, $Z(a)$ is isomorphic to $\mathbf{F}_q^+$ where $q$ is a power of a prime number. $\Box$\\

For simplicity let $Z(a)=\mathbf{H}$. Then $\mathbf{H}$ is an abelian subgroup of $G$. We show that a multiplication $\times$ can be defined on $\mathbf{H}$ which together with the operation of $\mathbf{H}$ induced from $G$ turns it into a near-field. We then prove that $G\simeq \mathbf{H}\rtimes \mathbf{H}^{\times}$.

Since $\v G/\mathbf{H}\v = \v \overline{a}\v$, and $\mathbf{H}=Z(a)$, the cosets of $G/\mathbf{H}$ are in one-to-one correspondence with elements of $\overline{a}$; for every $b\in \overline{a}$ there exists a unique $\tilde x_b= x_b\mathbf{H} \in G/\mathbf{H}$ such that $x_b ax_b^{-1}=b$. Now define a binary operation $\times$ on $\mathbf{H}$ in the following form. $e\times b = b\times e=e$ for every $b\in \mathbf{H}$, and for $b, c\in \overline{a}$
\begin{align*}
b\times c = x_b x_c a x_c^{-1} x_b^{-1}.
\end{align*}
$\times$ is well-defined because elements of $\mathbf{H}$ commute with every element of $\bar{a}$.\\

\noindent(i) $\mathbf{H}^{\times}= (\mathbf{H}\setminus e, \times)$ is a group whose identity element is $a$. \\

The inverse of $b$ is $b'$ where $b'=x_b^{-1}a x_b$. The associativity is proved using $\tilde x_{b\times c} = \tilde x_b \tilde x_c$. $\Box$\\

\noindent(j) $\mathbf{H}$ with the induced operation from $G$ as the addition and $\times$ as the multiplication forms a near-field.\\

We need to show that multiplication is distributive from left with respect to addition: $b\times (cd) = (b\times c)(b\times d)$. If one of $b,c,d$ is equal to $e$, it obviously holds; otherwise both sides are equal to $x_b cdx_b^{-1}$. $\Box$\\

In the following we assume that $q=\v \mathbf{H}\v >2$ since otherwise $G\simeq \mathbf{H}\rtimes \mathbf{H}^{\times}$ is obvious.\\

\noindent(k) There exists $g\in G\setminus \mathbf{H}$ such that $G=\mathbf{H}Z(g)$.\\

Since $\mathbf{H}$ is a near-field, the centralizer of $\mathbf{H}^{\times}$ is non-trivial (see Section~\ref{sec:4.1}). This means that there exists $g\in G$ such that $gag^{-1}\neq a$ and $(gag^{-1})\times b = b\times (gag^{-1})$ for every $b\in \mathbf{H}$. In other words, for every $x\in G$, $gx a x^{-1}g = xgag^{-1}x^{-1}$, or equivalently, $\overline{g}\subseteq g\mathbf{H}$. Therefore, $\v \overline{g}\v \leq \v \mathbf{H}\v =q$, and then $\v Z(g)\v \geq q-1$. On the other hand, $\mathbf{H}$ is a normal subgroup of $G$, so $\mathbf{H}Z(g)$ is a subgroup and since $Z(g)\cap \mathbf{H}=Z(g)\cap Z(a)=\{e\}$, the size of this subgroup is equal to $q\v Z(g)\v$. Thus $\v Z(g)\v \leq q-1$ and therefore, $Z(g)$ is a subgroup of order $q-1$ and $G= \mathbf{H}Z(g)$. $\Box$\\

\noindent(l) $G\simeq \mathbf{H}\rtimes \mathbf{H}^{\times}$.\\

Since $G=\mathbf{H}Z(g)$ and $\mathbf{H}\cap Z(g)=\{e\}$, every element of $G$ can uniquely be written in the form of $bk$ where $b\in \mathbf{H}$ and $k\in Z(g)$. It is easy to see that the map which sends $bk\in G$ to $(b, kak^{-1})\in \mathbf{H}\rtimes \mathbf{H}^{\times}$ is an isomorphism. We are done.

\end{proof}

\subsection{Example: two modular invariants in $\mathcal{Z}(A_6)$}\label{sec:a6}

Assume that the transposition $(X, Y)$ forms a modular invariant in $\mathcal{Z}(G)$. Theorem~\ref{thm:uniqueness} classifies all groups for which there exists such a modular invariant when $X$ is a chargeon and $Y$ is a fluxion. If we relax this assumption by keeping $X$ to be a chargeon but assuming $Y=(\overline{a}, \rho)$ is arbitrary, most steps in the proof of Theorem~\ref{thm:uniqueness} (with some variations) still hold. In particular, $\dim \rho =1$  is enough to show that $G\simeq \near$.  (In this case proving (g) needs more work.)

There are two remaining cases. First, both $X$ and $Y$ are chargeon, and second, non of them is chargeon. The first case cannot happen; if $X=(e, \pi)$ and $Y=(e, \pi')$, $S_{\mathbf{0}X}=S_{\mathbf{0}Y}$ implies $\dim \pi=\dim \pi'$. Moreover, for every $g\neq e $, $S_{X(\overline{g}, \mathbf{1})}=S_{Y(\overline{g},\mathbf{1})}$ is equivalent to $\tr_{\pi}(g)=\tr_{\pi'}(g)$. Thus $\pi=\pi'$.

Now assume that $X=(\overline{a}, \rho)$ and $Y=(\overline{b}, \rho')$, and $a,b\neq e$. Then for every irreducible representation $\pi$ of $G$, $S_{X(e, \pi)} = S_{Y(e, \pi)}$, and we obtain
\begin{align*}
\frac{\tr_{\pi}(a^{-1})\dim \rho}{\v Z(a)\v} = \frac{\tr_{\pi}(b^{-1})\dim \rho'}{\v Z(b)\v}.
\end{align*}
For $\pi=\mathbf{1}$ we find that $\dim \rho/\v Z(a)\v = \dim \rho' /\v Z(b)\v$, and thus for every $\pi$, $\tr_{\pi}(a^{-1})=\tr_{\pi}(b^{-1})$. Equivalently, $a$ and $b$ belong to the same conjugacy class, and $X,Y$ have the same magnetic flux.

Here we present an example of a modular invariant in the latter case ($X=(\overline{a}, \rho)$ and $Y=(\overline{a}, \rho')$). Let $A_6$ be the alternating group of order six (the group of even permutations over $\{1, \dots, 6\}$).
Let $a=(1, 2)(3, 4)$. $\overline{a}$ is equal to the set of all permutations of the form $(t_1, t_2)(t_3, t_4)$. Then $\v \overline{a}\v = 45$, and $\v Z(a)\v= \v A_6\v/ \v \overline{a} \v = 8$;
\begin{align*}
Z(a)=\{ e, a, b_1=(1,2)(5,6), b_2=(3,4)(5,6), b_3=(1,3)(2,4), b_4=(1,4)(2,3),\nonumber\\ c_1=(1,3,2, 4)(5,6), c_2=(1,4,2,3)(5,6) \}.
\end{align*}
The conjugacy classes of $Z(a)$ are $\{e\}$, $\{a\}$, $\{b_1, b_2\}$, $\{b_3, b_4\}$, and $\{c_1, c_2\}$, and the character table of $Z(a)$ is as follows.
\begin{align*}
\begin{array}{|c|ccccc|}
\hline
       &  e   &   a    &    b_1,b_2   &    b_3,b_4   &    c_1,c_2    \\
\hline
\rho_1 &  1   &   1    &      1       &       1      &       1        \\
\rho_2 &  1   &   1    &      -1      &       -1     &       1         \\
\hline
\rho_3 &  1   &   1    &      1      &       -1     &       -1         \\
\rho_4 &  1   &   1    &      -1      &       1     &       -1         \\
\hline
\mu    &  2   &   -2   &      0      &       0     &       0         \\
\hline
\end{array}
\end{align*}
We claim that both transpositions $(X_1, X_2)$ and $(X_3, X_4)$, where $X_i=(\overline{a}, \rho_i)$, are modular invariants.

$T$ is invariant under these transpositions because $T_{X_i}=1$ for every $1\leq i\leq 4$.

For every $Y=(e, \pi)$, $S_{X_iY} = \tr_{\pi}(a^{-1})\dim \rho_i/\v Z(a)\v$, and since $\dim \rho_i =1$ for every $i$, $S_{X_iY}=S_{X_jY}$ for $i,j\in \{1, 2, 3, 4\}$.

For every $Y=(\overline{h}, \pi)$, where $h\notin \{e\}\cup \overline{a}$, we have
\begin{align*}
S_{X_iY} = \frac{1}{\v Z(a)\v\cdot \v Z(h)\v} \sum_{k:\, khk^{-1}\in Z(a)} \tr_{\rho_i}(kh^{-1}k^{-1}) \tr_{\pi}(k^{-1}a^{-1}k).
\end{align*}
Observe that $c_1, c_2$ are the only elements of $Z(a)$ which can be conjugates of $h^{-1}$, and $\tr_{\rho_i} (c_1) =\tr_{\rho_j}(c_1)$ for $i,j\in \{1,2\}$ and $i,j\in \{3,4\}$. Therefore, $S_{X_iY}=S_{X_jY}$.

Now it remains to show that $S_{X_1X_1}=S_{X_2X_2}$, $S_{X_3X_3}=S_{X_4X_4}$, $S_{X_1X_3}=S_{X_2X_3}=S_{X_2X_4}=S_{X_1X_4}$, and $S_{X_1Y}=S_{X_2Y}=S_{X_3Y}=S_{X_4Y}$, where $Y=(\overline{a}, \mu)$. These equalities can simply be verified given the character table of $Z(a)$.

\small

\end{document}